\documentclass[aps,prx,longbibliography,twocolumn,nofootinbib]{revtex4-2}

\usepackage[utf8]{inputenc}
\usepackage[T1]{fontenc}
\usepackage{amsmath}
\usepackage{amssymb}
\usepackage{amsthm}
\usepackage{mathtools}
\usepackage{enumitem}
\usepackage{hyperref}
\usepackage{cleveref}
\usepackage{booktabs}
\usepackage{graphicx}
\usepackage{xcolor}
\usepackage{threeparttable}

\crefname{equation}{Eq.}{Eqs.}
\Crefname{equation}{Equation}{Equations}

\hypersetup{
  colorlinks=true,
  linkcolor=blue,
  citecolor=blue,
  urlcolor=blue
}

\newtheorem{theorem}{Theorem}
\newtheorem{lemma}[theorem]{Lemma}

\newcommand{\tr}[2][]{\operatorname{Tr}_{#1}\!\left[#2\right]}

\newcommand{\Fock}{\mathcal{F}}
\newcommand{\SN}[2]{\mathcal{S}_{#1,#2}}

\newcommand{\rank}{\operatorname{rank}}
\newcommand{\ket}[1]{\left|#1\right\rangle}
\newcommand{\bra}[1]{\left\langle#1\right|}
\newcommand{\braket}[2]{\left\langle#1\middle|#2\right\rangle}
\newcommand{\ketbra}[2]{\left|#1\middle\rangle\middle\langle#2\right|}

\begin{document}

\title{Orbital-Rotation Shadow Tomography Reduces the Classical Cost of Quantum-Classical Auxiliary-Field Quantum Monte Carlo}

\author{Luning Zhao}
\affiliation{IonQ Inc, College Park, MD, 20740, USA}
\email[Corresponding author: ]{zhao@ionq.co}

\author{Joshua Goings}
\affiliation{IonQ Inc, College Park, MD, 20740, USA}

\author{Evgeny Epifanovsky}
\affiliation{IonQ Inc, College Park, MD, 20740, USA}

\author{Martin Roetteler}
\affiliation{IonQ Inc, College Park, MD, 20740, USA}

\begin{abstract}
Quantum-classical auxiliary-field quantum Monte Carlo (QC-AFQMC) is a promising route to using quantum computers for strongly correlated electronic structure problems in chemistry, but its scalability is limited by a classical post-processing bottleneck: the measured quantum trial state must be used to evaluate overlaps, force biases, and local energies for many walkers throughout the AFQMC propagation. Existing matchgate-shadow approaches avoid the exponential cost of generic classical shadows, but their post-processing remains computationally expensive for large molecular systems. In this work, we show that particle-number-preserving fermionic shadows, implemented as orbital-rotation shadows, are able to drastically reduce QC-AFQMC post-processing costs. We derive efficient estimators for overlaps and their first and second derivatives, giving $\mathcal{O}(N^3)$ scaling for overlaps and force biases and $\mathcal{O}(N^4)$ scaling for local energies per shadow and walker at fixed filling and with $\mathcal{O}(N)$ Cholesky vectors, matching the asymptotic scaling of classical single-Slater-determinant AFQMC. For fixed normalized target determinants, we also prove that the complex-overlap estimator is unbiased with variance upper bounded by $\mathcal{O}(\eta)$, where $\eta$ is the particle number, yielding a sufficient budget of $\mathcal{O}(\eta/\epsilon^2)$ ideal independent measurements for additive root-mean-square overlap error $\epsilon$. We combine these estimators with a novel virtual-correlation-energy treatment and a GPU-accelerated implementation, validate the method against exact overlap calculations, and demonstrate QC-AFQMC using orbital-rotation shadows. Performance projections for $\pi$-conjugated molecules up to pentacene with a (22e, 22o) active space and 356 total orbitals show that the full classical post-processing workflow can be completed in slightly more than two days on 8$\times$A100 graphics processing units (GPUs). These results significantly reduce classical post-processing costs for QC-AFQMC with quantum trial states and motivate further improvements in preparation and measurement of the trial state on quantum hardware.
\end{abstract}

\maketitle

\section{Introduction}
\label{sec:introduction}

One of the most promising applications of quantum computers is the modeling of strongly correlated electronic systems in molecules and materials~\cite{cao2019quantum,mcardle2020quantum,bauer2020quantum,Patel25_7490,Yuri25_111335}. Examples include catalysts with multiple transition-metal centers such as the FeMo cofactor in nitrogenase~\cite{reiher2017elucidating}, large $\pi$-conjugated organic molecules such as acenes~\cite{anthony2006acenes,hachmann2007radical}, electronically excited states, and, potentially, high-temperature superconductors~\cite{lee2006doping,keimer2015quantum}.

Several approaches have been developed in response to this opportunity.
Motivated by formally provable exponential quantum speedups relative to exact classical simulation methods, the quantum phase estimation (QPE) algorithm~\cite{abrams1999quantum,aspuru2005simulated,omalley2016scalable,reiher2017elucidating, Neven18_041015, cao2019quantum, Babbush21_030305} computes the energy and other properties of a molecular system on fault-tolerant quantum computing devices. However, current demonstrations of QPE remain limited to small model systems because of hardware constraints.

For this reason, near-term algorithms such as variational quantum algorithms (VQAs)~\cite{tilly2022variational, peruzzo2014variational} have emerged as an alternative suitable for noisy quantum hardware. VQAs include the variational quantum eigensolver, hardware-efficient and unitary-coupled-cluster ansatz families~\cite{Zhao22_60, kandala2017hardware, Gambetta17_23879}, adaptive ansatz constructions~\cite{grimsley2019adaptive}, and related variational optimization frameworks~\cite{peruzzo2014variational,farhi2014quantum,kandala2017hardware,romero2018strategies,hempel2018quantum,grimsley2019adaptive,tilly2022variational,cerezo2021variational}.  Although VQAs can be executed on near-term quantum devices, their effectiveness in practice suffers from barren plateaus, prohibitive number of measurements~\cite{gonthier2022measurements} and sensitivity to hardware noise~\cite{mcclean2018barren,cerezo2021variational,gonthier2022measurements}. 

Two alternative hybrid quantum-classical approaches leverage the capabilities of near-term hardware to sample from prepared quantum states followed up by classical post-processing. The first approach, sample-based quantum diagonalization (SQD), including closely related quantum-selected configuration-interaction methods, uses quantum samples to identify important many-electron configurations and then build and diagonalize projected Hamiltonians with a classical high-performance computer~\cite{kanno2023quantum,robledo2024chemistry,yu2025quantum}. In demonstrations on near-term quantum computers, these algorithms have exhibited noise resilience and delivered highly accurate results for large systems~\cite{Motta26_01138} although questions remain about the benefit of SQD over state-of-the-art conventional selected configuration-interaction methods~\cite[Secs.~4.3--4.4]{reinholdt2025critical}.

This paper focuses on the second of those hybrid approaches, the quantum-classical auxiliary-field quantum Monte Carlo (QC-AFQMC) method~\cite{huggins2022unbiasing,huang2024evaluating,zhao2026matchgate}. QC-AFQMC is a quantum extension of classical AFQMC~\cite{motta2018abinitio, lee2022twenty, Zhao20_174105}, which propagates a stochastic representation of the wave function in imaginary time and in the long-time limit converges the wave function to the ground state~\cite{zhang1997constrained,zhang2003phaseless,motta2018abinitio,lee2022twenty,shee2023transformative}.
In practice, however, a trial state is used to constrain the propagation and control the sign problem, which is a fundamental bottleneck for generic projection based fermionic Monte Carlo methods~\cite{Umrigar143_164105, troyer2005complexity}. This constraint introduces an energy bias that vanishes when the trial state is the exact eigenstate. Quantum computer's ability to prepare highly accurate and nearly exact trial state could significantly benefit AFQMC to reduce and even completely remove the bias. This was first demonstrated in a 2022 study~\cite{huggins2022unbiasing}, where a trial state was prepared on a quantum computer and used in AFQMC to guide imaginary time propagation. Preliminary results on molecular and lattice systems have shown that quantum trial states can significantly reduce the energy bias compared with classical trial states making QC-AFQMC a promising algorithm for quantum simulation.

Each time step of QC-AFQMC involves calculating the overlap between a trial state $\ket{\Psi_T}$ and a non-orthogonal Slater determinant $\ket{\phi}$. For a quantum trial state, the naive approach is to invoke the Hadamard test to evaluate the overlap, however the need for frequent queries to the quantum computer renders this idea impractical. Instead, the classical shadow technique~\cite{aaronson2020shadow,huang2020predicting,elben2023randomized} estimates the properties of $\ket{\Psi_T}$, including the necessary overlap $\braket{\Psi_T}{\phi}$, by converting randomized-basis measurements of $\ket{\Psi_T}$ into desired observables. It was this approach that enabled the first demonstration of QC-AFQMC on quantum hardware.

Although classical shadows provide a way to estimate the overlaps, whether this can be achieved efficiently (in polynomial time) depends on the structure of the measurement basis. For example, the first demonstration~\cite{huggins2022unbiasing} of QC-AFQMC used Clifford shadows~\cite{huang2020predicting}, which require an exponentially scaling runtime for overlap evaluations thus severely limiting the application of QC-AFQMC. Matchgate shadows~\cite{Miyake24_57,wan2023matchgate} have addressed this problem by applying a design from fermionic operators, which preserve fermionic symmetry. Wan \textit{et al.}~\cite{wan2023matchgate} gave an explicit finite-size variance bound for complex-overlap estimation and an $\mathcal{O}(N^4)$ overlap algorithm per shadow and target Slater determinant; their numerical results suggest sublinear variance growth. The first end-to-end demonstration of matchgate-shadow QC-AFQMC was limited to the hydrogen molecule and a spin defect model with 4 qubits~\cite{huang2024evaluating}. Since then, the applicability has been extended to chemically relevant molecular systems~\cite{zhao2026matchgate,nappi2026qcafqmc}.

Despite these successes, matchgate-shadow QC-AFQMC still has significant practical limitations. One notable issue is the cost of shadow post-processing in AFQMC, including the calculation of force biases and local energies. In earlier derivative-based implementations, the arithmetic costs per shadow and walker are $\mathcal{O}(N^4)$ for overlaps and force biases and $\mathcal{O}(N^5)$ for local energies~\cite[Sec.~II.1, discussion after Eq.~(4); Sec.~IV.3, discussion before Eq.~(22)]{nappi2026qcafqmc}. 
Nappi \textit{et al.}~\cite[Eqs.~(4), (8), (36)--(38), and the accompanying complexity discussion]{nappi2026qcafqmc} subsequently reduced these per-shadow and per-walker costs to $\mathcal{O}(N^3)$ for overlaps, $\mathcal{O}(N^3+N^2N_{\gamma})$ for force biases, and $\mathcal{O}(N^3N_{\gamma})$ for local energies, where $N_{\gamma}$ is the number of Cholesky vectors. For $N_{\gamma}=\mathcal{O}(N)$, the force-bias cost is also cubic.

Given these advances in the framework of fermionic symmetry preserving matchgate shadows, it becomes natural to ask whether it is possible to further bring down the cost by designing fermionic shadow tomography that preserves more than just the fermionic symmetry. One natural choice is particle-number symmetry since molecular electronic-structure calculations commonly target a fixed particle number. If one applies the constraints to matchgate unitaries so that they need to commute with number operators, one obtains a special set of unitaries that are of the form of single-particle orbital rotation unitaries in quantum chemistry. This was first studied by Zhao \textit{et al.}~\cite{Zhao21_110504}, and then more thoroughly by Low~ \cite{low24_08964}, who derived an efficient estimator of the orbital rotation shadows to compute $k$-body reduced density matrices ($k$-RDMs). Furthermore, the sample complexity of orbital-rotation shadow tomography   is independent of the number of qubits and that the approach admits a simpler post-processing algorithm compared to that for matchgate shadows~\cite{Larocca26_27254}.

This work builds upon Low's orbital-rotation shadow framework~\cite{low24_08964}. Starting from the shadow estimator, we derive an efficient algorithm for estimating the overlap between an arbitrary fixed-particle-number quantum state and a Slater determinant along with derivative formulas for evaluating force biases and local energies. For a fixed normalized target determinant under ideal independent sampling, we establish unbiased complex-overlap estimation with variance at most $4\eta+3$, where $\eta$ is the particle number, giving a sufficient measurement budget of $\mathcal{O}(\eta/\epsilon^2)$ for additive root-mean-square overlap error $\epsilon$. The resulting algorithm scales as $\mathcal{O}(N^3)$ for overlaps and force biases and as $\mathcal{O}(N^4)$ for local energies per shadow and walker at fixed filling and with $\mathcal{O}(N)$ Cholesky vectors matching the scaling of classical AFQMC with a single-determinant trial state. We show that besides post-processing, orbital rotation shadows also lead to a simpler and more efficient algorithm for virtual correlation energy (VCE), a crucial technique that allows QC-AFQMC to account for dynamic electron correlation outside the active space. Our runtime projections indicate that this approach yields orders-of-magnitude efficiency improvements in classical QC-AFQMC post-processing relative to our earlier matchgate-shadow implementation. This work substantially reduces the post-processing cost of QC-AFQMC and brings QC-AFQMC closer to practical large-scale applications.

The paper is organized as follows. In Sec.~\ref{sec:theory}, we derive the efficient orbital-rotation-shadow overlap estimator and present the associated algorithms and cost analysis. Section~\ref{sec:results} reports numerical validation, QC-AFQMC applications, and runtime projections. We then discuss time to solution separating quantum execution from classical post-processing. The paper concludes with a discussion of open challenges and future directions.

\section{Theory}
\label{sec:theory}

Here we derive the unbiased one-shadow determinant estimator in \cref{eq:direct_overlap_single_shot}, together with derivative formulas that can be evaluated efficiently within QC-AFQMC. We proceed as follows. The target overlap is first rewritten as an off-diagonal coherence of an augmented fixed-particle state in \cref{eq:overlap_identity}. The orbital-rotation-shadow calculation then determines the required normalization and produces the one-shot estimator. Its dependence on the target determinant and measurement record is compressed into the $\eta\times\eta$ measured-row block in \cref{eq:K_block_Phi}, which is implemented in Algorithm~1. For QC-AFQMC, the conjugate estimator takes the determinant form in \cref{eq:D_single_shadow}; differentiating this determinant gives the first and mixed second derivative formulas in \cref{eq:first_derivative,eq:second_derivative}. Table~\ref{tab:core_notation} collects the notation reused across these steps. 

\begin{table}[t]
  \caption{Summary of notation used in the orbital-rotation-shadow overlap construction.}
  \label{tab:core_notation}
  \centering
  \footnotesize
  \setlength{\tabcolsep}{5pt}
  \renewcommand{\arraystretch}{1.08}
  \begin{tabular}{@{}p{0.26\columnwidth}p{0.66\columnwidth}@{}}
    \toprule
    \textbf{Symbol} & \textbf{Description} \\
    \midrule
    $N,\eta,N_{\mathrm{aug}}$
      & Numbers of physical modes, particles, and augmented modes,
        respectively, with $N_{\mathrm{aug}}=N+\eta$. \\
    $[m]$
      & Index set $\{1,\ldots,m\}$. \\
    $\SN{m}{\eta}$
      & Set of $\eta$-element occupation patterns on $m$ modes. \\
    $\ket{\Psi_T},\ket{\phi(\Phi)}$
      & Physical $\eta$-particle quantum state and target Slater determinant
        on the $N$ physical modes. \\
    $\Phi$
      & $N\times\eta$ occupied-orbital matrix representing
        $\ket{\phi(\Phi)}$ or an AFQMC walker. \\
    $\vec a,\ket{\Psi_{\mathrm{aug}}}$
      & Ancillary occupation pattern $\vec a=N+[\eta]$ and coherent augmented
        state $\ket{\Psi_{\mathrm{aug}}}=(\ket{\Psi_T}+\ket{\vec a})/\sqrt{2}$. \\
    $u,U_\eta(u)$
      & Haar-random $N_{\mathrm{aug}}\times N_{\mathrm{aug}}$ orbital rotation and its action on the
        fixed-$\eta$ sector. \\
    $\vec z$
      & Measured occupation pattern in $\SN{N_{\mathrm{aug}}}{\eta}$; one shadow record is
        $(u,\vec z)$. \\
    $U_o,U_a$
      & Measured-row blocks $U_o=u_{\vec z,[N]}\in\mathbb{C}^{\eta\times N}$
        and $U_a=u_{\vec z,\vec a}\in\mathbb{C}^{\eta\times\eta}$. \\
    $S,H$
      & Shadow-defined matrices $S=U_o^\dagger U_a\in
        \mathbb{C}^{N\times\eta}$ and $H=S^\dagger\Phi\in
        \mathbb{C}^{\eta\times\eta}$. \\
    $o_\phi,D(\Phi)$
      & Conjugate overlap conventions
        $o_\phi=\braket{\phi}{\Psi_T}$ and
        $D(\Phi)=\braket{\Psi_T}{\phi(\Phi)}$. \\
    \bottomrule
  \end{tabular}

  \vspace{3pt}
  \begin{minipage}{0.94\columnwidth}
    \scriptsize
    \raggedright
    We append $\eta$ ancillary modes, giving $N_{\mathrm{aug}}=N+\eta$.
    The subscript $o$ in $U_o$ labels the original physical-mode block,
    not occupied orbitals. Before augmentation, $u\in\mathcal U_N$ and
    $\vec z\in\SN{N}{\eta}$. In the augmented overlap protocol,
    $u\in\mathcal U_{N_{\mathrm{aug}}}$ and
    $\vec z\in\SN{N_{\mathrm{aug}}}{\eta}$.
  \end{minipage}
\end{table}

\subsection{Orbital-Rotation Shadow Records}

For an $\eta$-particle state on $N$ modes, let orbital occupation patterns
$\vec p,\vec r\in\mathcal S_{N,\eta}$. A one-particle rotation
$u\in\mathcal U_N$ induces
\begin{equation}
    U_\eta(u)
    =
    \bigwedge^\eta u,
    \qquad
    \bra{\vec p}U_\eta(u)\ket{\vec r}
    =
    \det\!\left(u_{\vec p,\vec r}\right).
\end{equation}

An orbital-rotation shadow is obtained by sampling $u$ from the Haar measure
on $\mathcal U_N$, applying $U_\eta(u)$, and measuring the occupied modes.
The outcome is an ordered occupation pattern
\begin{equation}
    \vec z
    =
    (z_1,\ldots,z_\eta)
    \in
    \mathcal S_{N,\eta},
    \quad
    1
    \leq
    z_1
    <
    \cdots
    <
    z_\eta
    \leq
    N.
\end{equation}
One shadow record is a pair $(u,\vec z)$.

\subsection{Augmented Overlap Observable}
\label{sec:augmented_observable}

In order to compute the overlaps between the trial state and a Slater determinant, one could first turn this overlap to a high-order reduced density matrix (RDM) of an augmented state, and use particle-number-conserving shadows to evaluate the RDM~\cite{low24_08964}.

We append
$\eta$ ancillary modes and introduce a known $\eta$-particle Slater determinant
supported entirely on those modes. The physical state and the reference Slater
determinant then occupy disjoint mode blocks within the same fixed-$\eta$
particle sector. Their coherent superposition converts the desired complex
overlap into an off-diagonal coherence. The random orbital rotation acts on
the full physical-plus-ancillary space, allowing this relative phase to
appear in the occupation statistics.

We derive the estimator by specializing the fixed-particle inverse-channel
construction of Ref.~\cite{low24_08964} to the augmented coherence. After
expressing the $k=\eta$ inverse observable through a generating polynomial, we
evaluate the fixed-particle trace and show that only its degree-$\eta$
coefficient survives. The resulting one-shot estimator reduces to a single
$\eta\times\eta$ matrix determinant.

Consider an $N$-mode fermionic state $\ket{\Psi_T}$ with a fixed particle number
$\eta$. The goal is to estimate the overlap

\begin{equation}
    \label{eq:target_overlap}
    o_\phi
    =
    \braket{\phi}{\Psi_T}
\end{equation}
for a Slater determinant
\begin{equation}
    \label{eq:target_slater}
    \ket{\phi}
    =
    U_\eta(w)\ket{\vec q},
    \qquad
    \vec q
    \in
    \mathcal S_{N,\eta},
\end{equation}
where $w\in\mathcal U_N$ is a one-particle orbital rotation. Let $w_j$ denote the $j$-th column of $w$. The occupied-orbital matrix is
\begin{equation}
    \Phi
    =
    \begin{bmatrix}
        w_{q_1} & \cdots & w_{q_\eta}
    \end{bmatrix}
    =
    w_{:,\vec q}
    \in
    \mathbb C^{N\times\eta}.
\end{equation}

We now implement this augmentation. Appending $\eta$ ancillary modes gives
\begin{equation}
    \label{eq:augmented_modes}
    N_{\mathrm{aug}}
    =
    N+\eta.
\end{equation}
The ancillary occupation pattern is
\begin{equation}
    \label{eq:ancilla_modes}
    \vec a
    =
    N+[\eta]
    =
    (N+1,\ldots,N+\eta).
\end{equation}

Extend the target rotation to the augmented one-particle space as
\begin{equation}
    \label{eq:augmented_rotation}
    W
    =
    w\oplus I_\eta.
\end{equation}
We regard all three states as elements of the same $N_{\mathrm{aug}}$-mode, fixed-$\eta$
particle sector.  
We
use $\ket{\vec a}$ for the Slater determinant with every physical mode unoccupied
and all $\eta$ ancillary modes occupied. The physical and ancillary branches
therefore have support on disjoint mode blocks.

The augmented state is
\begin{equation}
    \label{eq:augmented_state}
    \ket{\Psi_{\mathrm{aug}}}
    =
    \frac{1}{\sqrt{2}}
    \left(
        \ket{\Psi_T}\otimes\ket{0}^{\otimes\eta}
        +
        \ket{\vec a}
    \right).
\end{equation}

Note that the appended modes provide a coherent phase reference; they are not additional basis
functions for the physical Hamiltonian.

In the shadow measurement below, the Haar-random rotation
$u\in\mathcal U_{N_{\mathrm{aug}}}$ acts on this full physical-plus-ancillary space. It is
distinct from the fixed target rotation $W$.

Define a number-conserving transition operator
\begin{equation}
    \label{eq:D_operator}
    D_{\vec q}^{\vec a}
    =
    a_{a_1}^{\dagger}
    \cdots
    a_{a_\eta}^{\dagger}
    a_{q_\eta}
    \cdots
    a_{q_1}.
\end{equation}
With occupation Slater determinants ordered by increasing mode index,
$D_{\vec q}^{\vec a}$ acts on the fixed-$\eta$ sector as
\begin{equation}
    D_{\vec q}^{\vec a}
    =
    \ketbra{\vec a}{\vec q}.
\end{equation}

Because $W$ rotates the physical Slater determinant while leaving the ancillary
Slater determinant unchanged,
\begin{equation}
    \label{eq:rotated_coherence}
    C_\phi
    =
    U_\eta(W)
    D_{\vec q}^{\vec a}
    U_\eta^\dagger(W)
    =
    \ket{\vec a}\!\bra{\phi}.
\end{equation}

The physical and ancillary branches have disjoint occupation support, so
\begin{equation}
    \braket{\phi}{\vec a}
    =
    \braket{\Psi_T}{\vec a}
    =
    0.
\end{equation}
Consequently,
\begin{equation}
    \label{eq:overlap_identity}
    \begin{aligned}
        \tr[\wedge^\eta]{
            C_\phi
            \ketbra{\Psi_{\mathrm{aug}}}{\Psi_{\mathrm{aug}}}
        }
        &=
        \braket{\phi}{\Psi_{\mathrm{aug}}}
        \braket{\Psi_{\mathrm{aug}}}{\vec a}
        \\
        &=
        \frac{1}{2}
        \braket{\phi}{\Psi_T}.
    \end{aligned}
\end{equation}
Here $\tr[\wedge^\eta]{\cdot}$ denotes the trace over the fixed-$\eta$
particle sector.

The resulting estimator can be stated directly in terms of measured
records. For each record, prepare $\ket{\Psi_{\mathrm{aug}}}$, sample a
Haar-random orbital rotation $u\in\mathcal U_{N_{\mathrm{aug}}}$, apply $U_\eta(u)$, and
measure the occupation pattern $\vec z$. Given the occupied-orbital matrix
$\Phi\in\mathbb C^{N\times\eta}$ representing the target Slater determinant,
retain the measured-row blocks
$U_o=u_{\vec z,[N]}\in\mathbb C^{\eta\times N}$ and
$U_a=u_{\vec z,\vec a}\in\mathbb C^{\eta\times\eta}$.
The record contributes the complex number
\[
\widehat O_{\vec q,w}(u,\vec z)
=
2\binom{N_{\mathrm{aug}}+1}{\eta}
\det\!\left[(U_o\Phi)^\dagger U_a\right].
\]
Averaging independent records gives the overlap estimate
\[
\widehat o_\phi
=
\frac{1}{N_{\mathrm{shadow}}}
\sum_{t=1}^{N_{\mathrm{shadow}}}
\widehat O_{\vec q,w}(u^{(t)},\vec z^{(t)}).
\]
In the following sections we will provide a detailed derivation of this result. 

\subsection{Orbital-Rotation Shadow Estimator}
\label{sec:shadow_estimator}

Suppose we have already performed orbital-rotation shadow measurements and
stored the records $(u,\vec z)$. We now use these records to estimate the
augmented coherence expectation in \cref{eq:overlap_identity}. The fixed
target rotation $W$ is absorbed into the effective rotation $g_{\vec z}$
defined below.

Let $v_{\vec z}$ be any permutation matrix that maps the first $\eta$ modes
to the measured occupation pattern $\vec z$, so that
\begin{equation}
    \left(v_{\vec z}^{\dagger}u\right)_{[\eta],:}
    =
    u_{\vec z,:}.
\end{equation}
Its action on the remaining modes may be chosen arbitrarily. We use
\begin{equation}
    \Pi_{[\eta]}
    =
    \ketbra{[\eta]}{[\eta]}
\end{equation}
for the many-particle projector onto the Slater determinant occupying the first
$\eta$ modes. We then define
\begin{equation}
    \label{eq:g_definition}
    g_{\vec z}
    =
    v_{\vec z}^{\dagger}uW.
\end{equation}

Let $E_{\eta,\eta}$ denote the $k=\eta$ inverse-channel observable of
Ref.~\cite[Eq.~(84)]{low24_08964}. The fixed target rotation $W$ can be
absorbed into the inverse-channel rotation. Using
\cref{eq:rotated_coherence}, the representation property of $U_\eta$, and
cyclicity of the trace gives
\begin{equation}
    \label{eq:absorb_target_rotation}
    \begin{aligned}
        &\tr[\wedge^\eta]{
            C_\phi
            U_\eta^\dagger(v_{\vec z}^\dagger u)
            E_{\eta,\eta}
            U_\eta(v_{\vec z}^\dagger u)
        }
        \\
        &=
        \tr[\wedge^\eta]{
            D_{\vec q}^{\vec a}
            U_\eta^\dagger(g_{\vec z})
            E_{\eta,\eta}
            U_\eta(g_{\vec z})
        }.
    \end{aligned}
\end{equation}

The resulting single-shot estimator is
\begin{align}
    \label{eq:single_shot_shadow_estimator}
    \widehat{\Omega}_{\vec q,w}(u,\vec z)
    &=
    \tr[\wedge^\eta]{
        D_{\vec q}^{\vec a}
        U_\eta^\dagger(g_{\vec z})
        E_{\eta,\eta}
        U_\eta(g_{\vec z})
    }
    \\
    &=
    \sum_{s=0}^{\eta}    E'_{\eta,\eta,s}
    \nonumber\\
    &\qquad{}\times
    \tr[\wedge^\eta]{
        D_{\vec q}^{\vec a}
        U_\eta^\dagger(g_{\vec z})
        \operatorname{Sim}_{\eta,s}(\Pi_{[\eta]})
        U_\eta(g_{\vec z})
    },
    \nonumber
\end{align}
in which we reused the similarity projector introduced in Ref.~\cite[Eq.~(85)]{low24_08964} as
\begin{equation}
\label{eq:similarity_projector}
\operatorname{Sim}_{\eta,s}(\Pi_{[\eta]})
=
\sum_{\vec r\in\SN{N_{\mathrm{aug}}}{\eta}:|\vec r\cap[\eta]|=s}
\ketbra{\vec r}{\vec r},
\end{equation}
which sums over $\vec r$ that share the same value $s=|\vec r\cap[\eta]|$. 

The expectation is over the Haar-random rotation $u$ and the
Born-distributed outcome $\vec z$ obtained by measuring
$U_\eta(u)\ket{\Psi_{\mathrm{aug}}}$. Inverse-channel unbiasedness and
\cref{eq:overlap_identity} give expectation value
\begin{equation}
    \label{eq:augmented_estimator_unbiased}
    \mathbb E_{u,\vec z}
    \left[
        \widehat{\Omega}_{\vec q,w}(u,\vec z)
    \right]
    =
    \tr[\wedge^\eta]{
        C_\phi
        \ketbra{\Psi_{\mathrm{aug}}}{\Psi_{\mathrm{aug}}}
    }
    =
    \frac{1}{2}o_\phi.
\end{equation}

Using $s=|\vec r\cap[\eta]|$, the coefficient is the $k=\eta$
specialization of Ref.~\cite[Eq.~(84)]{low24_08964},
\begin{equation}
\label{eq:eprime_eta}
E'_{\eta,\eta,s}
=E_{\eta,\eta,\eta-s}
=(-1)^{\eta-s}
\binom{\eta}{s}^{-1}
\binom{N_{\mathrm{aug}}-\eta+s}{s}.
\end{equation}

We now introduce the Majorana operators and write creation and annihilation operators with them, 
\begin{equation}
\label{eq:majorana_definition}
\begin{aligned}
a_p&=\frac{\gamma_{2p-1}+i\gamma_{2p}}{2},
&
a_p^{\dagger}&=\frac{\gamma_{2p-1}-i\gamma_{2p}}{2},
\\
\beta_p&=\gamma_{2p-1}\gamma_{2p}.
\end{aligned}
\end{equation}
Then $\hat n_p=(1+i\beta_p)/2$. Operators $\beta_p$ commute for
distinct modes, so we define their $j$-th elementary symmetric polynomial as
\begin{equation}
    e_j(\{\beta\}_{[\eta]})
    =
    \sum_{1\leq p_1<\cdots<p_j\leq\eta}
    \beta_{p_1}\cdots\beta_{p_j},
    \qquad
    e_0(\{\beta\}_{[\eta]})
    =
    1.
\end{equation}
With this notation, Ref.~\cite[Eq.~(95)]{low24_08964} gives
\begin{align}
\label{eq:sim_beta_expansion}
\operatorname{Sim}_{\eta,s}(\Pi_{[\eta]})
&=
(-1)^s
\sum_{j=0}^{\eta}
f_{\eta,s}(j)i^j e_j(\{\beta\}_{[\eta]}),
\\
\label{eq:f_coefficients}
f_{\eta,s}(j)
&=
\sum_{x=j}^{\eta}
(-1)^x\binom{x}{s}2^{-x}
\binom{\eta-j}{\eta-x}.
\end{align}
Now we define
\begin{align}
\label{eq:generating_polynomial}
P(\kappa, {\vec z})
&=
\tr[\wedge^{\eta}]{
D_{\vec q}^{\vec a}
U_{\eta}^{\dagger}(g_{\vec z})
\rho_{\eta}(\kappa)
U_{\eta}(g_{\vec z})},\\
\label{eq:rho_eta}
\rho_{\eta}(\kappa)
&=
\prod_{\ell=1}^{\eta}(1+i\kappa\beta_{\ell})
=
\sum_{j=0}^{\eta}(i\kappa)^j e_j(\{\beta\}_{[\eta]}).
\end{align}
Expanding $P(\kappa,\vec z)$ in powers of $\kappa$ defines
\begin{equation}
    \label{eq:pg_coefficients}
    \begin{aligned}
        P(\kappa,\vec z)
        &=
        \sum_{j=0}^{\eta}
        b_j(g_{\vec z})\kappa^j,
        \\
        b_j(g_{\vec z})
        &=
        \tr[\wedge^\eta]{
            D_{\vec q}^{\vec a}
            U_\eta^\dagger(g_{\vec z})
            i^j e_j(\{\beta\}_{[\eta]})
            U_\eta(g_{\vec z})
        }.
    \end{aligned}
\end{equation}
Substituting these coefficients into
\cref{eq:single_shot_shadow_estimator,eq:sim_beta_expansion} gives
\begin{equation}
\label{eq:estimator_from_coefficients}
\widehat{\Omega}_{\vec q,w}(u,\vec z)
=
\sum_{s=0}^{\eta}
E'_{\eta,\eta,s}(-1)^s
\sum_{j=0}^{\eta}f_{\eta,s}(j)b_j(g_{\vec z}).
\end{equation}
The desired single-shot overlap estimator is
\begin{equation}
\label{eq:overlap_single_shot_general}
\widehat O_{\vec q,w}(u,\vec z)
=2\widehat{\Omega}_{\vec q,w}(u,\vec z).
\end{equation}

\subsection{Fixed-Particle Trace and Full-Fock Trace}
\label{sec:fixed_full_trace}

We now discuss how to evaluate \cref{eq:generating_polynomial} efficiently. All operators in the trace are either fermionic Gaussian operators or density matrices, or products of linear combinations of Majorana operators. Such traces can be evaluated efficiently using Grassmann integration~\cite{wan2023matchgate}.
However, Grassmann integration naturally evaluates traces over the full Fock space, whereas the trace in \cref{eq:generating_polynomial} is restricted to the fixed-particle sector. The following theorem relates the full-space trace to the fixed-particle trace.

\begin{lemma}
\label{thm:rho_is_gamma}
For a one-particle matrix
$A\in\mathbb C^{N_{\mathrm{aug}}\times N_{\mathrm{aug}}}$,
define its number-conserving second quantization
$\Gamma(A)$ by
\begin{equation}
    \label{eq:Gamma_exterior_power}
    \langle\vec p|\Gamma(A)|\vec q\rangle
    :=
    \begin{cases}
        \det A_{\vec p,\vec q},
        & |\vec p|=|\vec q|,\\
        0,
        & |\vec p|\neq|\vec q|.
    \end{cases}
\end{equation}
Here $\vec p$ and $\vec q$ label occupation vectors,
and $A_{\vec p,\vec q}$ is the submatrix whose row
and column indices are the occupied modes in
$\vec p$ and $\vec q$, respectively, ordered increasingly. If $A$ is invertible, we can write the operator as
\begin{equation}
    \label{eq:Gamma_d_exp}
    \Gamma(A)
    =
    \exp\!\left(
        \sum_{p,q=1}^{N_{\mathrm{aug}}}
        [\log A]_{pq}\,a_p^\dagger a_q
    \right).
\end{equation}
Now let
\begin{equation}
    \label{eq:P_projector}
    P_{[\eta]}
    =
    \sum_{\ell=1}^{\eta}
    \ketbra{\ell}{\ell}
\end{equation}
be the one-particle projector onto the first $\eta$ modes. For
$\kappa\neq1$, set
$\alpha=(1-\kappa)^\eta$ and
$\mu=2\kappa/(1-\kappa)$, and define
\begin{equation}
    d
    =
    I+\mu P_{[\eta]}
    =
    \begin{pmatrix}
        (1+\mu)I_\eta & 0 \\
        0 & I_{N_{\mathrm{aug}}-\eta}
    \end{pmatrix}.
\end{equation}
With the above definitions we show that,
\begin{equation}
    \rho_\eta(\kappa)
    =
    \prod_{\ell=1}^{\eta}
    \left(1+i\kappa\beta_\ell\right)
    =
    \alpha\Gamma(d).
\end{equation}

\end{lemma}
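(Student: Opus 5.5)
Both sides of the claimed identity are diagonal in the occupation-number basis and factorize over modes, so I will compare them mode by mode.

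\textbf{Left-hand side.} From the Majorana definitions, $\hat n_p=(1+i\beta_p)/2$, so $i\beta_p=2\hat n_p-1$. The factors $\beta_\ell$ for distinct modes commute. Hence each factor becomes $1+i\kappa\beta_\ell=(1-\kappa)+2\kappa\hat n_\ell$. On a basis state $\ket{\vec r}$ this factor acts as the scalar $1-\kappa$ if mode $\ell$ is empty and $1+\kappa$ if it is occupied. Therefore
\[
\rho_\eta(\kappa)\ket{\vec r}=(1-\kappa)^{\eta-s}(1+\kappa)^{s}\ket{\vec r},\qquad s=|\vec r\cap[\eta]|.
\]

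\textbf{Right-hand side.} I will use the definition \cref{eq:Gamma_exterior_power}. Since $d$ is diagonal, $\det d_{\vec p,\vec r}$ vanishes unless $\vec p=\vec r$. When $\vec p=\vec r$, it is the product of the diagonal entries over the occupied modes, namely $(1+\mu)^{s}$ with the same $s$. This holds in every particle-number sector, so the full Fock space is covered. It is consistent with \cref{eq:Gamma_d_exp}, because $\log d=\log(1+\mu)P_{[\eta]}$ gives $\Gamma(d)=\prod_{\ell\le\eta}(1+\mu)^{\hat n_\ell}$. This requires $1+\mu\neq0$, i.e.\ $\kappa\neq-1$; for $\kappa=-1$ the determinant definition \cref{eq:Gamma_exterior_power} is used directly.

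\textbf{Comparison.} With $\mu=2\kappa/(1-\kappa)$ we get $1+\mu=(1+\kappa)/(1-\kappa)$. Then
\[
\alpha(1+\mu)^s=(1-\kappa)^{\eta}\left(\frac{1+\kappa}{1-\kappa}\right)^{s}=(1-\kappa)^{\eta-s}(1+\kappa)^s.
\]
This matches the left-hand side eigenvalue on every basis state $\ket{\vec r}$, which proves the identity.

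The argument is elementary. The only points needing care are the sign convention $i\beta_p=2\hat n_p-1$, which follows from \cref{eq:majorana_definition} via $\gamma_{2p-1}\gamma_{2p}$ anticommutation, and the degenerate case $\kappa=-1$. That case is handled by the determinant definition rather than the exponential form. The restriction $\kappa\neq1$ is exactly what keeps $\mu$ finite.
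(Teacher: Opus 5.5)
Your proof is correct and follows essentially the paper's route. The paper also uses $i\beta_\ell=2\hat n_\ell-1$ to write each factor as $(1-\kappa)(1+\mu\hat n_\ell)$ and reads off $\Gamma(d)=\prod_{\ell\le\eta}(1+\mu\hat n_\ell)$ from the diagonal form of $d$, which is your basis-state eigenvalue comparison stated at the operator level; your remark that the singular case $\kappa=-1$ must use the determinant definition rather than the exponential form is a small point the paper leaves implicit.
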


\begin{proof}

Since
\begin{equation}
    i\beta_\ell
    =
    2\hat n_\ell-1,
\end{equation}
each factor satisfies
\begin{equation}
    1+i\kappa\beta_\ell
    =
    (1-\kappa)
    \left(
        1+\mu\hat n_\ell
    \right).
\end{equation}
Therefore,
\begin{equation}
    \rho_\eta(\kappa)
    =
    (1-\kappa)^\eta
    \prod_{\ell=1}^{\eta}
    \left(
        1+\mu\hat n_\ell
    \right).
\end{equation}
Because $d=I+\mu P_{[\eta]}$ multiplies the first $\eta$ one-particle
basis vectors by $1+\mu$ and leaves the remaining basis vectors unchanged,
\begin{equation}
    \Gamma(d)
    =
    \prod_{\ell=1}^{\eta}
    \left(
        1+\mu\hat n_\ell
    \right).
\end{equation}
Combining these identities gives
\begin{equation}
    \label{eq:identity_rho_gamma}
    \rho_\eta(\kappa)
    =
    \alpha\Gamma(d).
\end{equation}

\end{proof}

\begin{theorem}[Fixed-particle trace versus full-Fock trace]
\label{thm:fixed_full_trace_conversion}
Let
\begin{equation}
\label{eq:fixed_trace_T}
T_{\eta}(g,\kappa)
=
\tr[\wedge^\eta]{
D_{\vec q}^{\vec a}U_{\eta}^{\dagger}(g)
\rho_{\eta}(\kappa)U_{\eta}(g)}
\end{equation}
and let $T_{\Fock}(g,\kappa)$ denote the same operator traced over the
full Fock space $\Fock_{N_{\mathrm{aug}}}$, with $U(g)$ acting on all particle sectors.
Then
\begin{equation}
\label{eq:full_fixed_conversion}
T_{\Fock}(g,\kappa)=2^{N_{\mathrm{aug}}-2\eta}T_{\eta}(g,\kappa).
\end{equation}
\end{theorem}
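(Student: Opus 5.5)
The plan is to strip away the common factors, reduce both traces to explicit determinants via the second-quantization map $\Gamma$ of \cref{thm:rho_is_gamma}, and compare them.

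\textbf{Reduction to a single Gaussian operator.} By \cref{thm:rho_is_gamma}, $\rho_\eta(\kappa)=\alpha\Gamma(d)$. On every particle sector, $U(g)=\Gamma(g)$, and $\Gamma$ is multiplicative by Cauchy--Binet. Hence, in both the fixed-$\eta$ trace and the full-Fock trace,
\[
U^\dagger(g)\rho_\eta(\kappa)U(g)=\alpha\,\Gamma(M),\qquad M=g^\dagger d g .
\]
The constant $\alpha$ cancels, so it suffices to compare $\tr[\wedge^\eta]{D_{\vec q}^{\vec a}\Gamma(M)}$ with $\tr[\Fock]{D_{\vec q}^{\vec a}\Gamma(M)}$. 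It is convenient to assume $\kappa$ is generic (so that $d$ and $2I+\mu P_{[\eta]}$ are invertible) and extend by polynomial continuity in $\kappa$.

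\textbf{The two traces as determinants.} On the $\eta$-sector, $D_{\vec q}^{\vec a}=\ketbra{\vec a}{\vec q}$, so
\[
\tr[\wedge^\eta]{D_{\vec q}^{\vec a}\Gamma(M)}=\bra{\vec q}\Gamma(M)\ket{\vec a}=\det M_{\vec q,\vec a}.
\]
For the full Fock space I would use the Gaussian identity
\[
\tr[\Fock]{a^\dagger_{a_1}\cdots a^\dagger_{a_\eta}a_{q_\eta}\cdots a_{q_1}\Gamma(M)}=\det(I+M)\,\det\!\big([M(I+M)^{-1}]_{\vec q,\vec a}\big).
\]
This is the generalized Wick theorem for the normalized Gaussian state $\Gamma(M)/\det(I+M)$, whose one-body density matrix is $M(I+M)^{-1}$, with the transpose convention fixed by checking the case $\eta=1$. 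I would derive it either by Grassmann integration, as in Ref.~\cite{wan2023matchgate}, or by differentiating $\tr[\Fock]{\Gamma(M)\Gamma(I+tE)}=\det(I+M+tME)$ with respect to rank-one perturbations. Getting the index order and the sign convention of the $\eta$-fold product to agree with the ordering in \cref{eq:D_operator} is the step I expect to be the main obstacle. A safe check is that at $N_{\mathrm{aug}}=2\eta$ the sector-$\eta$ term must reproduce $\det M_{\vec q,\vec a}$ with the same sign.

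\textbf{Exploiting the structure of $d$.} Since $I+M=g^\dagger(2I+\mu P_{[\eta]})g$,
\[
\det(I+M)=2^{N_{\mathrm{aug}}-\eta}(2+\mu)^\eta .
\]
Moreover $M(I+M)^{-1}=g^\dagger X g$, where
\[
X=d\,(2I+\mu P_{[\eta]})^{-1}=\tfrac12 I+cP_{[\eta]},\qquad c=\frac{\mu}{2(2+\mu)}.
\]
The key point is that $\vec q\subset[N]$ and $\vec a=N+[\eta]$ are disjoint, so the identity part contributes nothing to the $(\vec q,\vec a)$ block. Therefore
\[
[M(I+M)^{-1}]_{\vec q,\vec a}=c\,(g^\dagger P_{[\eta]}g)_{\vec q,\vec a},
\]
and by the same disjointness $M_{\vec q,\vec a}=\mu\,(g^\dagger P_{[\eta]}g)_{\vec q,\vec a}$.

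\textbf{Comparison.} Both traces are therefore multiples of the same minor $\det(g^\dagger P_{[\eta]}g)_{\vec q,\vec a}$. Their ratio is
\[
\frac{2^{N_{\mathrm{aug}}-\eta}(2+\mu)^\eta c^\eta}{\mu^\eta}
=2^{N_{\mathrm{aug}}-\eta}\,2^{-\eta}
=2^{N_{\mathrm{aug}}-2\eta},
\]
which is \cref{eq:full_fixed_conversion}. Both sides are polynomials in $\kappa$, so the identity extends to all $\kappa$, including any excluded values.
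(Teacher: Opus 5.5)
Your proposal is correct and follows the paper's proof essentially step for step. Both arguments write $U^\dagger(g)\rho_\eta(\kappa)U(g)=\alpha\Gamma(h)$ with $h=g^\dagger d g$. Both evaluate the fixed-sector trace as the minor $\det h_{\vec q,\vec a}$, and the Fock trace as $\det(I+h)\det\bigl([h(I+h)^{-1}]_{\vec q,\vec a}\bigr)$ via Wick's theorem. Both then use the projector structure of $g^\dagger P_{[\eta]}g$ and the disjointness of $\vec q$ and $\vec a$ to drop the identity part of the off-diagonal block, which yields the factor $2^{N_{\mathrm{aug}}-2\eta}$.
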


\begin{proof}

We define the following by absorbing the rotation
\begin{equation}
\label{eq:h_and_K}
\begin{aligned}
h&=g^{\dagger}(I+\mu P_{[\eta]})g=I+\mu K,
&
K&=g^{\dagger}P_{[\eta]}g
\end{aligned}
\end{equation}
Note that $K^2=K$ and $\rank K=\eta$.

On the fixed $\eta$-particle sector,
\begin{equation}
\label{eq:fixed_trace_minor}
\begin{aligned}
T_{\eta}(g,\kappa)&=\alpha\tr[\wedge^\eta]{
D_{\vec q}^{\vec a}\Gamma(h)} \\
&=\alpha \tr[\wedge^\eta]{
\Gamma(h)\ket{\vec{a}}\bra{\vec{q}}} \\
&=\alpha \tr[\wedge^\eta]{
\sum_{\vec{p}}{\det(h_{\vec p,\vec a})}\ket{\vec{p}}\bra{\vec{q}}} \\
&=\alpha\det(h_{\vec q,\vec a}). \\
\end{aligned}
\end{equation}
where we used \cref{thm:rho_is_gamma} and the cyclicity of the trace, and to remind the readers, $\vec q$ and $\vec a$ are defined in Eq.~\ref{eq:D_operator}. Note that $\kappa$ no longer appears explicitly in the RHS as we have applied the identity in Eq.~\ref{eq:identity_rho_gamma}. 

For the full Fock trace, the standard number-conserving Gaussian trace
identities give~\cite{Koonin97_1}
\begin{equation}
    \label{eq:gaussian_trace}
    \tr[\Fock]{\Gamma(h)}
    =
    \det(I+h),
\end{equation}
and
\begin{equation}
    \frac{
        \tr[\Fock]{a_i^\dagger a_j\Gamma(h)}
    }{
        \tr[\Fock]{\Gamma(h)}
    }
    =
    (C_h)_{j,i},
\end{equation}
where the Gaussian contraction matrix is
\begin{equation}
    \label{eq:gaussian_contraction}
    C_h
    =
    h(I+h)^{-1}.
\end{equation}

Wick's theorem applied to $D_{\vec q}^{\vec a}$ gives
\begin{equation}
    \label{eq:full_trace_minor}
    T_{\Fock}(g,\kappa)
    =
    \alpha
    \det(I+h)
    \det\!\left((C_h)_{\vec q,\vec a}\right).
\end{equation}
Since $K$ is a rank-$\eta$ projector,
\begin{equation}
    \label{eq:det_I_h}
    \det(I+h)
    =
    \det(2I+\mu K)
    =
    2^{N_{\mathrm{aug}}-\eta}(2+\mu)^\eta.
\end{equation}
Also,
\begin{equation}
    \label{eq:gaussian_contraction_projector}
    C_h
    =
    (I+\mu K)(2I+\mu K)^{-1}
    =
    \frac{1}{2}(I-K)
    +
    \frac{1+\mu}{2+\mu}K.
\end{equation}
The sets $\vec q$ and $\vec a$ are disjoint, so the identity block does not
contribute to $(C_h)_{\vec q,\vec a}$. Thus
\begin{equation}
    \label{eq:gaussian_contraction_block}
    (C_h)_{\vec q,\vec a}
    =
    \frac{1}{2(2+\mu)}
    h_{\vec q,\vec a}.
\end{equation}
Substituting
\cref{eq:gaussian_contraction_block,eq:det_I_h}
into \cref{eq:full_trace_minor} yields
\begin{equation}
\label{eq:full_trace_conversion_final}
T_{\Fock}(g,\kappa)
=
2^{N_{\mathrm{aug}}-2\eta}\alpha\det(h_{\vec q,\vec a})
=
2^{N_{\mathrm{aug}}-2\eta}T_{\eta}(g,\kappa).
\end{equation}
Both sides are polynomials in $\kappa$, so the identity extends to
$\kappa=1$ by polynomial continuation.
\end{proof}

\subsection{Direct Matrix Determinant Estimator}
\label{sec:direct_determinant_estimator}

The preceding theorem relates the fixed-particle and full-Fock traces. The
matrix determinant reduction is most transparent in the fixed-particle
representation. Comparing
\cref{eq:generating_polynomial,eq:fixed_trace_T} gives
\begin{equation}
    \label{eq:P_as_fixed_trace}
    P(\kappa,\vec z)
    =
    T_\eta(g_{\vec z},\kappa).
\end{equation}

The Gaussian representation from \cref{thm:rho_is_gamma} is
\begin{equation}
    \label{eq:rho_alpha_gamma_again}
    \begin{aligned}
        \rho_\eta(\kappa)
        &=
        \alpha\Gamma(d),
        &
        \alpha
        &=
        (1-\kappa)^\eta,
        \\
        d
        &=
        I+\mu P_{[\eta]},
        &
        \mu
        &=
        \frac{2\kappa}{1-\kappa}.
    \end{aligned}
\end{equation}
Substituting $g=g_{\vec z}$ into \cref{eq:fixed_trace_minor} gives
\begin{equation}\label{eq:Pg_as_minor}
\begin{aligned}
P(\kappa, \vec z)
&=
(1-\kappa)^{\eta}
\det\!\left((I+\mu K^{(\vec z)})_{\vec q,\vec a}\right),
\\
K^{(\vec z)} &= g_{\vec z}^{\dagger}P_{[\eta]}g_{\vec z}.
\end{aligned}
\end{equation}

Since $\vec q$ and $\vec a$ are disjoint,
\begin{equation}
\label{eq:offdiag_identity_vanishes}
(I+\mu K^{(\vec z)})_{\vec q,\vec a}=\mu K^{(\vec z)}_{\vec q,\vec a}.
\end{equation}
Therefore
\begin{equation}
\label{eq:Pg_monomial}
P(\kappa, \vec z)
=(2\kappa)^{\eta}\det(K^{(\vec z)}_{\vec q,\vec a}),
\end{equation}
and all lower-order coefficients vanish:
\begin{equation}
\label{eq:leading_coefficient}
b_j(g_{\vec z})=0\quad(j<\eta),
\qquad
b_{\eta}(g_{\vec z})=2^{\eta}\det(K^{(\vec z)}_{\vec q,\vec a}).
\end{equation}
Substitution into \cref{eq:estimator_from_coefficients}, together with
\begin{equation}
\label{eq:f_eta_s_eta}
f_{\eta,s}(\eta)=(-1)^{\eta}\binom{\eta}{s}2^{-\eta},
\end{equation}
gives
\begin{align}
\label{eq:direct_augmented_estimator}
\widehat{\Omega}_{\vec q,w}(u,\vec z)
&=
2^{-\eta}b_{\eta}(g_{\vec z})
\sum_{s=0}^{\eta}\binom{N_{\mathrm{aug}}-\eta+s}{s}
\\
&=
\binom{N_{\mathrm{aug}}+1}{\eta}\det(K^{(\vec z)}_{\vec q,\vec a}).
\nonumber
\end{align}
Thus the desired overlap has the single-shot estimator
\begin{equation}
\label{eq:direct_overlap_single_shot}
\widehat O_{\vec q,w}(u,\vec z)
=
2\binom{N_{\mathrm{aug}}+1}{\eta}\det(K^{(\vec z)}_{\vec q,\vec a}).
\end{equation}

\subsection{Computing the Matrix Determinant Block}
\label{sec:computing_K}

Now we provide an efficient way to evaluate Eq.~\ref{eq:direct_overlap_single_shot} without forming the full $K$ matrix. We first note that only the $\eta\times\eta$ block $K_{\vec q,\vec a}$ is needed. Since $K=g^\dagger P_{[\eta]}g$ and $P_{[\eta]}$ projects onto the first
$\eta$ one-particle modes, let
\begin{equation}
\label{eq:G_rows}
G=g_{[\eta],:}
\end{equation}
and
\begin{equation}
\label{eq:K_from_G}
K=G^{\dagger}G.
\end{equation}
The permutation $v_{\vec z}$ maps $[\eta]$ to $\vec z$, so
\begin{equation}
\label{eq:G_measured_rows}
G=(uW)_{\vec z,:}
\end{equation}
and, as a result, 
\begin{equation}
\label{eq:K_block_uW}
K_{\vec q,\vec a}
=
\left((uW)_{\vec z,\vec q}\right)^{\dagger}
(uW)_{\vec z,\vec a}.
\end{equation}
Because $\vec q$ lies in the original system and $\vec a$ lies in the
ancilla modes, we can avoid forming $W$ explicitly and compute $K_{\vec q,\vec a}$ as
\begin{equation}
\label{eq:K_block_direct}
K_{\vec q,\vec a}
=
\left(u_{\vec z,[N]}w_{:,\vec q}\right)^{\dagger}
u_{\vec z,\vec a}.
\end{equation}
If the target Slater determinant is represented as the $N\times\eta$ orbital
coefficient matrix
\begin{equation}
\label{eq:Phi_target}
\Phi=w_{:,\vec q},
\end{equation}
then
\begin{equation}
\label{eq:K_block_Phi}
K_{\vec q,\vec a}
=
\left(u_{\vec z,[N]}\Phi\right)^{\dagger}
u_{\vec z,\vec a}.
\end{equation}
The derivation above used an orbital matrix with orthonormal columns,
$\Phi=w_{:,\vec q}$. The same estimator applies to a general AFQMC walker.
For a full-column-rank matrix, let
\begin{equation}
    \Phi
    =
    QR,
    \qquad
    Q^\dagger Q
    =
    I_\eta
\end{equation}
be a thin QR factorization. Since $Q$ has orthonormal columns, it can be completed to a unitary one-particle rotation and therefore satisfies the assumptions of the
derivation above. The corresponding Slater determinant satisfies
\begin{equation}
    \ket{\phi(\Phi)}
    =
    \det(R)\ket{\phi(Q)}.
\end{equation}
The single-shadow matrix determinant acquires the same factor in the bra orientation:
\begin{equation}
    \det\!\left[
        \left(u_{\vec z,[N]}\Phi\right)^\dagger
        u_{\vec z,\vec a}
    \right]
    =
    \det(R)^*
    \det\!\left[
        \left(u_{\vec z,[N]}Q\right)^\dagger
        u_{\vec z,\vec a}
    \right].
\end{equation}
Thus the estimator remains valid for an arbitrary full-column-rank walker, interpreted as the ordered, generally unnormalized Slater determinant $\ket{\phi(\Phi)}$. If $\Phi$ is rank deficient, both the Slater determinant state and the single-shadow matrix determinant vanish.

\subsection{Overlap Algorithm}
\label{sec:overlap_algorithm}
Algorithm~1 summarizes the orbital-rotation-shadow protocol for estimating overlaps with Slater determinants.
\begin{widetext}
\noindent\rule{\textwidth}{0.4pt}
\noindent\textbf{Algorithm 1: Estimating Slater-determinant overlaps with orbital-rotation shadows}

\smallskip
\noindent\textbf{Input:} an $N$-mode, $\eta$-particle state $\ket{\Psi_T}$;
target Slater determinants
$\ket{\phi_i}$ represented by orbital matrices
$\Phi_i\in\mathbb{C}^{N\times\eta}$ for $i=1,\ldots,L$; number of
shadow samples $N_{\mathrm{shadow}}$.

\smallskip
\noindent\textbf{Output:} estimates $\widehat o_i$ of
$o_i=\braket{\phi_i}{\Psi_T}$. 

\begin{enumerate}[leftmargin=1.2cm,itemsep=2pt]
\item Set $N_{\mathrm{aug}}=N+\eta$ and $\vec a=N+[\eta]$.
\item For $t=1,\ldots,N_{\mathrm{shadow}}$:
\begin{enumerate}[label=\alph*.,leftmargin=1.1cm,itemsep=1pt]
\item prepare
\begin{equation}
\label{eq:algorithm_augmented_state}
\ket{\Psi_{\mathrm{aug}}}=(\ket{\Psi_T}\otimes\ket{0}^{\otimes\eta}+\ket{\vec a})/\sqrt{2}.
\end{equation}
\item sample $u^{(t)}$ from Haar measure on $\mathcal{U}_{N_{\mathrm{aug}}}$;
\item apply $U_{\eta}(u^{(t)})$ to $\ket{\Psi_{\mathrm{aug}}}$;
\item measure in the occupation basis to obtain
$\vec z^{(t)}\in\SN{N_{\mathrm{aug}}}{\eta}$;
\item store $(u^{(t)},\vec z^{(t)})$.
\end{enumerate}
\item For each target $i$ and shadow $t$, form
\begin{equation}
\label{eq:algorithm_A_B_blocks}
A_i^{(t)}=u^{(t)}_{\vec z^{(t)},[N]}\Phi_i,
\qquad
B^{(t)}=u^{(t)}_{\vec z^{(t)},\vec a}.
\end{equation}
\item Compute
\begin{equation}
\label{eq:algorithm_single_shot}
\widehat O_i^{(t)}
=
2\binom{N_{\mathrm{aug}}+1}{\eta}
\det\!\left((A_i^{(t)})^{\dagger}B^{(t)}\right).
\end{equation}
\item Return the shadow mean
\begin{equation}
\label{eq:algorithm_shadow_mean}
\widehat o_i=\frac{1}{N_{\mathrm{shadow}}}\sum_{t=1}^{N_{\mathrm{shadow}}}\widehat O_i^{(t)}.
\end{equation}
\end{enumerate}
\noindent\rule{\textwidth}{0.4pt}
\end{widetext}

\subsection{Overlap Derivatives}
\label{sec:overlap_derivatives}

Algorithm~1 estimates the bra-oriented overlap
$o_\phi=\braket{\phi(\Phi)}{\Psi_T}$. QC-AFQMC uses the conjugate orientation
$D(\Phi)=\braket{\Psi_T}{\phi(\Phi)}=o_\phi^*$, which we adopt below.
Force biases and local energies require ratios of derivatives of $D(\Phi)$. The force bias and the local energy share the same structure as 
\begin{equation}
    \label{eq:force_bias_loc_eng}
    \frac{\bra{\Psi_T}
    \widehat O
    \ket{\phi(\Phi)}}{\braket{\Psi_T}{\phi(\Phi)}}
\end{equation}
For the force bias, $\widehat O$ is a one-body operator.
For the local energy, it is the Hamiltonian, evaluated using
one-body operators and their products.
Let a QC-AFQMC walker be represented by
\begin{equation}
\label{eq:walker_Phi}
\Phi\in\mathbb{C}^{N\times\eta},
\qquad
\ket{\phi(\Phi)}
=
\prod_{\alpha=1}^{\eta}
\left(\sum_{p=1}^{N}\Phi_{p\alpha}a_p^{\dagger}\right)\ket{0}.
\end{equation}
For a single shadow, we define
\begin{equation}
\label{eq:S_block}
U_o=u_{\vec z,[N]},
\qquad
U_a=u_{\vec z,\vec a},
\qquad
S=U_o^{\dagger}U_a.
\end{equation}
For a single shadow, the estimator of the denominator in Eq.~\ref{eq:force_bias_loc_eng} is
\begin{equation}
\label{eq:D_single_shadow}
\widehat D_{u,\vec z}(\Phi)
=
2\binom{N_{\mathrm{aug}}+1}{\eta}\det H,
\qquad
H=S^{\dagger}\Phi.
\end{equation}

Let $X\in\mathbb{C}^{N\times N}$ be a one-particle generator. Its
number-conserving Fock-space representation is
\begin{equation}
    \label{eq:X_generator}
    \widehat X
    =
    \sum_{p,q=1}^{N}
    X_{pq}a_p^{\dagger}a_q.
\end{equation}
The corresponding transformation of the walker is
\begin{equation}
    e^{\lambda\widehat X}\ket{\phi(\Phi)}
    =
    \ket{\phi\!\left(e^{\lambda X}\Phi\right)}.
\end{equation}
Define
\begin{equation}
    \Phi(\lambda)
    =
    e^{\lambda X}\Phi.
\end{equation}
in which $\lambda$ is an auxiliary scalar variable. 
Differentiating the exact overlap gives
\begin{equation}
    \label{eq:one_body_numerator}
    \left.
    \partial_\lambda
    D(\Phi(\lambda))
    \right|_{\lambda=0}
    =
    \bra{\Psi_T}
    \widehat X
    \ket{\phi(\Phi)}.
\end{equation}
Thus the overlap derivative is precisely the one-body matrix-element
numerator required in QC-AFQMC. The shadow records depend only on the
augmented trial state $\ket{\Psi_{\mathrm{aug}}}$, not on the walker matrix $\Phi$, so the
one-shot estimator may be differentiated with respect to $\Phi$. For a
single shadow, define
\begin{equation}
    \label{eq:H_X}
    H_X
    =
    S^\dagger X\Phi.
\end{equation}

When $H$ is nonsingular, Jacobi's formula gives
\begin{equation}
\label{eq:first_derivative}
\partial_{\lambda}
\widehat D_{u,\vec z}(\Phi(\lambda))\big|_{\lambda=0}
=
\widehat D_{u,\vec z}(\Phi)\tr{H^{-1}H_X}.
\end{equation}

The derivative numerator and overlap denominator must be averaged separately.
Using $N_{\mathrm{shadow}}$ shadow records, we estimate
\begin{equation}
\label{eq:ratio_after_averaging}
\partial_{\lambda}\log D(\Phi(\lambda))\big|_0
\approx
\frac{\widehat D_X}{\widehat D},
\end{equation}
where
\begin{equation}
\label{eq:averaged_DX_D}
\begin{aligned}
\widehat D_X
&=
\frac{1}{N_{\mathrm{shadow}}}\sum_t
\partial_{\lambda}
\widehat D_{u^{(t)},\vec z^{(t)}}(\Phi(\lambda))\big|_0,
\\
\widehat D
&=
\frac{1}{N_{\mathrm{shadow}}}\sum_t
\widehat D_{u^{(t)},\vec z^{(t)}}(\Phi).
\end{aligned}
\end{equation}
The ratio of sample means is consistent as $N_{\mathrm{shadow}}\rightarrow\infty$ when
$D(\Phi)\neq0$, but it is generally biased at finite $N_{\mathrm{shadow}}$ and becomes
ill-conditioned near an overlap zero. Averaging the ratio obtained from each
individual shadow would estimate a different quantity and should not be used.

For local-energy two-body terms, use two generators $X$ and $Y$ and set
\begin{equation}
\label{eq:Phi_two_parameters}
\Phi(\lambda_1,\lambda_2)
=e^{\lambda_1 X}e^{\lambda_2 Y}\Phi.
\end{equation}
With
\begin{equation}
\label{eq:H_Y_H_XY}
H_Y=S^{\dagger}Y\Phi,
\qquad
H_{XY}=S^{\dagger}XY\Phi,
\end{equation}
Jacobi's formula yields
\begin{equation}
\label{eq:second_derivative}
\begin{split}
\partial_{\lambda_1}\partial_{\lambda_2}
\widehat D_{u,\vec z}(\Phi(\lambda_1,\lambda_2))\big|_0
&=
\widehat D_{u,\vec z}(\Phi)
\big[
\tr{H^{-1}H_{XY}}
\\
&\quad
-\tr{H^{-1}H_YH^{-1}H_X}
\\
&\quad
+\tr{H^{-1}H_X}\tr{H^{-1}H_Y}
\big].
\end{split}
\end{equation}
For Cholesky-vector local-energy terms one sets $X=Y=X_{\gamma}$, so the
linear solve for $H^{-1}H_X$ is reused for both first-derivative factors. 

\subsection{Variance Bound of Orbital-Rotation Shadows}
In classical shadows, the number of samples needed to estimate observables to an additive error $\epsilon$ depends on the shadow variance. For complex overlaps, Wan \textit{et al.}~\cite{wan2023matchgate} provide an explicit finite-size variance bound; the $\mathcal{O}(\sqrt{N}\log N)$ asymptotic bound is proved for Gaussian-projector expectation values and numerically suggested for the nonzero-particle overlap case. 
Low~\cite{low24_08964} showed that the averaged shadow variance of $D_{\vec q}^{\vec a}$ is bounded by a constant $\mathcal{O}(1)$. Although it's tempting to think that this indicates the number of samples of orbital rotation shadows scales as $\mathcal{O}(1)$, one needs to note an important subtlety: the average variance is taken over all of the $\eta$-RDMs, whereas only a subset of the $\eta$-RDMs correspond to the estimation of a Slater determinant overlap, as pointed out by Wan \textit{et al.}~\cite{wan2023matchgate}.  

Recently, West {\it et al.}~\cite{Larocca26_27254} showed that the orbital-rotation shadows for estimating squared overlaps $|\braket{\phi_i}{\Psi_T}|^2$ scales as $\mathcal{O}(\eta~\log~\eta)$, independent of the number of modes $N$. This is a marked contrast with matchgate shadows, whose known variance bounds for the same quantities grow with $N$, see ~\cite{wan2023matchgate}. However, QC-AFQMC requires the complex overlap $\braket{\phi_i}{\Psi_T}$ itself, including its phase. Its estimation via the augmented-ancilla protocol of Sec.~\ref{sec:augmented_observable} has, to our knowledge, not been analyzed before. Here we close this gap: we show that the single-record estimator is unbiased and that its complex variance is bounded above by $4\eta + 3$, independent of $N$, so that $O(\eta/\epsilon^2)$ records suffice. We also show (Appendix~\ref{app:ov-variance}, eq.~(\ref{eq:ov-lower-bound})) that the variance is bounded below by $\eta+1$ for every normalized trial–Slater-determinant pair, so the $\Theta(\eta)$ scaling is tight.

\begin{theorem}[Variance bound for overlap estimation]
\label{thm:ov-variance}
Let $1\leq\eta\leq N$, $N_{\mathrm{aug}}=N+\eta$, and let
$|\Psi_T\rangle$ and $|\phi\rangle=U_\eta(w)|\vec q\rangle$
be a normalized $\eta$-particle trial state and a normalized Slater
determinant, respectively. Prepare
$|\Psi_{\mathrm{aug}}\rangle=(|\Psi_T\rangle+|\vec a\rangle)/\sqrt{2}$,
where $\vec a=N+[\eta]$ and the physical trial state has empty ancillary
modes. Sample $u$ from complex Haar measure on
$\mathcal U_{N_{\mathrm{aug}}}$ and $\vec z$ from the ideal Born
distribution after applying $U_\eta(u)$.
Then the single-record estimator $\widehat O_{\vec q,w}$ is unbiased for
$o_\phi=\langle\phi|\Psi_T\rangle$ and obeys
\begin{align}
 \operatorname{Var}_{\mathbb C}(\widehat O_{\vec q,w})
 &=\mathbb E_{u,\vec z}
 \bigl|\widehat O_{\vec q,w}-o_\phi\bigr|^2
 \notag\\
 &\leq 4(\eta+1)\frac{N+2}{N+\eta+2}-1
 \leq4\eta+3.
 \label{eq:ov-tight-bound}
\end{align}
The first bound is attained when
$|\Psi_T\rangle=e^{i\theta}|\phi\rangle$.
\end{theorem}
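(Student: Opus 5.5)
The plan is to write $\operatorname{Var}_{\mathbb C}(\widehat O_{\vec q,w})=\mathbb E|\widehat O_{\vec q,w}|^2-|o_\phi|^2$, take unbiasedness from what is already established, reduce the second moment to Haar moments of a random Slater determinant, and bound the one term that depends on $|\Psi_T\rangle$.

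\textbf{Unbiasedness.} This is already \cref{eq:augmented_estimator_unbiased} together with \cref{eq:overlap_single_shot_general}. The reduction to \cref{eq:direct_overlap_single_shot} is an exact algebraic identity for every record $(u,\vec z)$, so it does not affect the expectation.

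\textbf{Product form of the estimator.} Set $C=\binom{N_{\mathrm{aug}}+1}{\eta}$. I first rewrite the single-record determinant as a product of amplitudes. Since $\det\bigl[(U_o\Phi)^\dagger U_a\bigr]=\det(u_{\vec z,[N]}\Phi)^*\det(u_{\vec z,\vec a})$, we get
\[
\widehat O_{\vec q,w}=2C\,\langle \vec z|U_\eta(u)|\phi\rangle^*\,\langle\vec z|U_\eta(u)|\vec a\rangle .
\]
Averaging against the Born weights $|\langle\vec z|U_\eta(u)|\Psi_{\mathrm{aug}}\rangle|^2$ gives
\[
\mathbb E|\widehat O|^2=4C^2\sum_{\vec z}\mathbb E_u\Bigl[|\langle\vec z|U|\phi\rangle|^2|\langle\vec z|U|\vec a\rangle|^2|\langle\vec z|U|\Psi_{\mathrm{aug}}\rangle|^2\Bigr].
\]
By left-invariance of the Haar measure, $U_\eta(u)^\dagger|\vec z\rangle=|\chi\rangle$ is a Haar-random normalized Slater determinant, independent of $\vec z$. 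Hence
\[
\mathbb E|\widehat O|^2=4C^2\binom{N_{\mathrm{aug}}}{\eta}\,\mathbb E_\chi\bigl[|\langle\chi|\phi\rangle|^2|\langle\chi|\vec a\rangle|^2|\langle\chi|\Psi_{\mathrm{aug}}\rangle|^2\bigr].
\]

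\textbf{Expanding $\Psi_{\mathrm{aug}}$.} I expand $|\langle\chi|\Psi_{\mathrm{aug}}\rangle|^2=\tfrac12\bigl(|\langle\chi|\Psi_T\rangle|^2+|\langle\chi|\vec a\rangle|^2+2\operatorname{Re}\langle\chi|\Psi_T\rangle\langle\vec a|\chi\rangle\bigr)$. The cross term vanishes. To see this, replace $\chi$ by $U_\eta(I_N\oplus e^{i\theta}I_\eta)\chi$, which leaves the Haar law invariant. Under this substitution $\langle\chi|\phi\rangle$ and $\langle\chi|\Psi_T\rangle$ are unchanged, because both states have empty ancilla modes, while $\langle\vec a|\chi\rangle$ acquires the phase $e^{i\eta\theta}$. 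This leaves two terms:
\[
\mathbb E|\widehat O|^2=2C^2\binom{N_{\mathrm{aug}}}{\eta}\Bigl(\mathbb E_\chi\bigl[|\langle\chi|\phi\rangle|^2|\langle\chi|\vec a\rangle|^2|\langle\chi|\Psi_T\rangle|^2\bigr]+\mathbb E_\chi\bigl[|\langle\chi|\phi\rangle|^2|\langle\chi|\vec a\rangle|^4\bigr]\Bigr).
\]

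\textbf{Evaluating the $\Psi_T$-independent moment.} By $U(N)$-invariance I may take $\phi=|[\eta]\rangle$. Writing $\chi$ through a Haar isometry $V\in\mathbb C^{N_{\mathrm{aug}}\times\eta}$, the two overlaps are determinants of two disjoint $\eta\times\eta$ row blocks of $V$. I would compute $\mathbb E|\det V_1|^2|\det V_2|^4$ either through the Jacobi-ensemble/Selberg integral for the joint law of disjoint blocks of a Haar unitary, or through the Gaussian replacement $V=G(G^\dagger G)^{-1/2}$ combined with Wishart determinant moments and Beta factors. The answer should be a ratio of products of factorials. Combined with the prefactor $2C^2\binom{N_{\mathrm{aug}}}{\eta}$, it must reproduce the closed form $4(\eta+1)\frac{N+2}{N+\eta+2}-1$ when $\Psi_T=\phi$.

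\textbf{Main obstacle: the $\Psi_T$-dependent term.} This term equals $\langle\Psi_T|M|\Psi_T\rangle$ with
\[
M=\mathbb E_\chi\bigl[|\langle\chi|\phi\rangle|^2|\langle\chi|\vec a\rangle|^2\,\Pi_{\mathrm{phys}}|\chi\rangle\langle\chi|\Pi_{\mathrm{phys}}\bigr],
\]
where $\Pi_{\mathrm{phys}}$ is the projector onto the physical sector. So I need $\|M\|\le\langle\phi|M|\phi\rangle$. The plan is to use the stabilizer $U(\eta)\oplus U(N-\eta)$ of $\phi$, with the ancilla block held fixed. $M$ commutes with this group, so it is diagonal in the decomposition of $\wedge^\eta\mathbb C^N$ by excitation level $k=|\vec r\setminus[\eta]|$, and it acts as a scalar $m_k$ on each level. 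By $U(N-\eta)$ symmetry, $m_k$ equals the diagonal moment at a single representative determinant with $k$ orbitals swapped out.

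I would then show that $m_k$ is maximized at $k=0$. The intuition is that the weight $|\langle\chi|\phi\rangle|^2$ biases $\chi$ toward $\phi$. Concretely, I would check monotonicity in $k$ through the same block-determinant moment formula, or through a Cauchy–Schwarz/rearrangement argument on the row blocks of $V$. Once this holds, the first bound follows, with equality when $\Psi_T=e^{i\theta}\phi$.

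\textbf{Final inequality.} The last step is the elementary bound $4(\eta+1)\frac{N+2}{N+\eta+2}-1\le4\eta+3$, which holds because $\frac{N+2}{N+\eta+2}\le1$.
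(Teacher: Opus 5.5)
There is a genuine gap. Your symmetry reductions are sound. The ancilla-phase argument removes the cross term, and Schur's lemma for the stabilizer $U(\eta)\oplus U(N-\eta)$ makes $M$ a scalar $m_k$ on each excitation level; together these are a valid substitute for the paper's independent-mode-phase diagonality argument. The failure is in the step ``once $\|M\|\le\langle\phi|M|\phi\rangle$ holds, the first bound follows.'' That inequality only shows that the \emph{second moment} is maximized at $\Psi_T=\phi$, where it equals $4b_{N,\eta}$ with $b_{N,\eta}=(\eta+1)(N+2)/(N+\eta+2)$.

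The variance, however, is $\mathbb E|\widehat O|^2-|o_\phi|^2$, and the subtracted term $|o_\phi|^2=p_0$ is also maximal at $\Psi_T=\phi$. Your argument therefore yields only $\operatorname{Var}_{\mathbb C}\le4b_{N,\eta}-p_0$. In the worst case this is $4b_{N,\eta}$, giving $4\eta+4$ rather than $4\eta+3$. For $\eta=1$, $N=2$, $p_0=0$ it gives $32/5$, above the claimed $27/5$.

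What is actually needed is a quantitative gap. Write the $\Psi_T$-dependent part of the second moment as $\sum_j c_jp_j$, with $p_j$ the weight of $\Psi_T$ at excitation level $j$ relative to $\phi$. You must show $c_0-c_j\ge1$ for all $j\ge1$, and monotonicity of $m_k$ cannot give this. For instance, Cauchy--Schwarz gives $\mathbb E[x_\phi x_ax_r]\le(\mathbb E[x_\phi^2x_a]\,\mathbb E[x_r^2x_a])^{1/2}=\mathbb E[x_\phi^2x_a]$, with $x_\psi=|\langle\chi|\psi\rangle|^2$ and the last equality by permuting physical modes. This yields $m_k\le m_0$ at once but says nothing about the size of the gap.

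The paper obtains the gap by computing the whole spectrum. It uses the Gaussian replacement $\mathcal G=V\Xi^{1/2}$ with $\mathbb E(\det\Xi)^3=(\eta!)^3d_0d_1d_2$ and $d_j=\binom{N_{\mathrm{aug}}+j}{\eta}$. It then conditions on the rows shared between $\vec r$ and $\vec q,\vec a$, applies a Schur-complement identity, and uses a Cauchy--Binet column-symmetry argument. The result, in the adapted basis $\phi=|\vec q\rangle$, is $\mathbb E_V[x_{\vec q}x_{\vec a}x_{\vec r}]=(\eta+1)^2/[(\eta-k+1)(\eta-\ell+1)d_0d_1d_2]$, where $k=|\vec r\cap\vec q|$ and $\ell=|\vec r\cap\vec a|$. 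This gives $c_j=2b_{N,\eta}/(j+1)$, hence $c_0-c_j\ge b_{N,\eta}\ge(\eta+2)/2>1$. The second moment $\mathbb E|\widehat O|^2=2b_{N,\eta}\bigl(1+\sum_jp_j/(j+1)\bigr)$ then yields $\operatorname{Var}_{\mathbb C}\le3b_{N,\eta}+(b_{N,\eta}-1)p_0\le4b_{N,\eta}-1$.

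Relatedly, your treatment of the $\Psi_T$-independent moment is circular. Saying the Selberg or Gaussian computation ``must reproduce'' $4(\eta+1)\frac{N+2}{N+\eta+2}-1$ assumes the result. The Gaussian-replacement route you mention is the right one, but you have to carry it out for the sectors $(k,\ell)=(0,\eta)$ and $(\eta-j,0)$ with $j=0,\ldots,\eta$. Doing so supplies both the constant $4b_{N,\eta}$ and the gap the variance bound depends on.
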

The proof and exact state-dependent variance are given in
Appendix~\ref{app:ov-variance}.
For a normalized target Slater determinant fixed independently of the measurement
records, let $\widehat o_\phi$ be the ordinary mean of the single-record
overlap estimators evaluated on $N_{\mathrm{shadow}}$ independent records
generated under the ideal protocol of Theorem~\ref{thm:ov-variance}.
Independence and unbiasedness give
\begin{equation}
 \mathbb E|\widehat o_\phi-o_\phi|^2
 \leq\frac{4\eta+3}{N_{\mathrm{shadow}}}.
 \label{eq:ov-main-mean-mse}
\end{equation}
Thus, for $\epsilon>0$, a sufficient sample count for additive
root-mean-square overlap error at most $\epsilon$ is
\begin{equation}
 N_{\mathrm{shadow}}=
 \left\lceil\frac{4\eta+3}{\epsilon^2}\right\rceil.
 \label{eq:ov-main-rms-budget}
\end{equation}
For simultaneous high-confidence estimates of $L$ normalized target
Slater determinants fixed independently of the records, coordinatewise
median-of-means aggregation achieves complex absolute error at most
$\epsilon$ for every target with probability at least $1-\delta$, using
$\mathcal{O}(\eta\epsilon^{-2}\log(2L/\delta))$ independent records shared
by all targets, for $0<\delta<1$; see Appendix~\ref{app:ov-variance}.
This logarithmic confidence dependence requires median-of-means
aggregation; the ordinary mean obeys the mean-square guarantee above.
These are additive guarantees for fixed overlaps; final AFQMC energy
accuracy requires a separate analysis.

\subsection{Cost Scaling of QC-AFQMC}
\label{sec:scaling}
We now analyze the cost scaling for overlap evaluation. Constructing the matrix $A_i^{(t)}$ costs $\mathcal{O}(N\eta^2)$. The remaining operations involve only $\eta\times\eta$ matrices, including matrix multiplications, matrix determinants, and linear solves, each of which costs $\mathcal{O}(\eta^3)$. Therefore, the overall cost scales as
\begin{equation}
\label{eq:measured_row_direct_cost}
\mathcal{O}(N\eta^2+\eta^3)\sim\mathcal{O}(N\eta^2),
\end{equation}
per shadow and per walker. This improves the $\mathcal{O}(N^4)$ cost of the original matchgate-shadow overlap algorithm~\cite{wan2023matchgate,huang2024evaluating,zhao2026matchgate}. At fixed filling, $\eta\propto N$, the orbital-rotation cost is cubic, matching the recent matchgate algorithm~\cite{nappi2026qcafqmc}, while retaining explicit dependence on the particle number. 
For a fixed normalized target Slater determinant and additive root-mean-square overlap error $\epsilon$, combining Eq.~\ref{eq:measured_row_direct_cost} with the sufficient budget in Eq.~\ref{eq:ov-main-rms-budget} gives total arithmetic work $\mathcal{O}(N\eta^3/\epsilon^2)$, or $\mathcal{O}(N^4/\epsilon^2)$ at fixed filling. In each method, total overlap work is the per-shadow cost multiplied by $N_{\mathrm{shadow}}$ and the number of target Slater determinants. These fixed-overlap guarantees do not specify the measurement budget needed for a final AFQMC energy accuracy.

The cost scaling for computing force biases and local energies can be obtained in a similar way. For force biases in Eqs. \ref{eq:first_derivative}, $H^{-1}S^\dagger$ could be computed first with linear solve without forming $H^{-1}$ explicitly. The linear solve costs $\mathcal{O}(N\eta^2)$ and this is the only shadow-dependent step. The resulting matrix has size $\eta\times N$, we then accumulate over shadows, and contract it with $\Phi$ (size $N\times\eta$), which costs $\mathcal{O}(N^2\eta)$. The last step is to compute the trace, which can be done with cost $\mathcal{O}(N^2)$. The overall cost scaling for computing force bias is 
\begin{equation}
    \mathcal{O}(N_{\mathrm{shadow}}N\eta^2+N^2\eta+N_\gamma N^2)\sim\mathcal{O}(N\eta^3)
\end{equation}
in which we assume $N_\gamma\sim\mathcal{O}(N)$ and a fixed $\epsilon$. Force biases have the same cost scaling as overlaps. 

Local energies can be computed similarly to force bias, with the most expensive part being the quadratic term in the right hand of Eqs. \ref{eq:second_derivative}. One now has to explicitly form the matrix $H^{-1}S^\dagger X\Phi$, which costs $\mathcal{O}(N_{\mathrm{shadow}}N_\gamma N\eta^2)\sim\mathcal{O}(N^2\eta^3)$. 

\subsection{Virtual Correlation Energy}
One major advantage of QC-AFQMC is that it can account for both static and dynamic correlation. Static correlation is treated by selecting an active space and preparing a quantum trial state in that active space. Dynamic correlation is incorporated through the virtual correlation energy (VCE) technique~\cite{huggins2022unbiasing,zhao2026matchgate}.
Combining the two treatments
allows us to propagate walkers in the full orbital space. VCE approximates the full-space trial state as
\begin{equation}
    \ket{\Psi_T}=\ket{\Xi^\alpha_c}\otimes\ket{\Xi^\beta_c}\otimes\ket{\Psi_{T,a}}\otimes\ket{0^\alpha_v}\otimes\ket{0^\beta_v},
\end{equation}
where $\ket{\Psi_{T,a}}$ is the trial state prepared on a quantum computer within the active space with $\eta$ 
electrons, $\ket{\Xi^{\alpha (\beta)}_c}$ is a Slater determinant containing all frozen occupied $\alpha$ ($\beta$) orbitals with $\eta^{\alpha (\beta)}_c$ electrons, and $\ket{0^{\alpha (\beta)}_v}$ is the vacuum state in the space
of virtual $\alpha$ ($\beta$) orbitals, with $\eta=\eta_a^\alpha+\eta_a^\beta$. 

Given a walker in the full orbital space, VCE computes its overlap with the full-space trial state as
\begin{equation}
  \braket{\Psi_T}{\phi}
  =
  \frac{
  \det(L^\alpha_c)\det(L^\beta_c)
  }{
  \det(Q)
  }
  \braket{\Psi_{T,a}}{\tilde{\phi}_a}.
\end{equation}
where $L_c^{\alpha(\beta)}$ and $Q$ are matrices computed classically, and $\ket{\tilde{\phi}_a}$ is a rotated walker in the active space.

The active-space overlap $\braket{\Psi_{T,a}}{\tilde{\phi}_a}$ is then evaluated using orbital-rotation shadows and multiplied by the classically computed prefactor. The full VCE derivation is given in the appendix.
As one could see, the cost of VCE is dominated by QR decomposition of matrices with dimension of $(\eta^{\alpha(\beta)}_c+\eta^{\alpha(\beta)}_a)\times\eta^{\alpha(\beta)}_c$ , and computing matrix determinants of $\eta^{\alpha(\beta)}_c\times\eta^{\alpha(\beta)}_c$ and $\eta^{\alpha(\beta)}_a\times\eta^{\alpha(\beta)}_a$ matrices. Therefore, the VCE part of the cost for computing overlaps is (dropping the $\alpha(\beta)$ superscripts for simplicity)
\begin{equation}
    \mathcal{O}((\eta_c+\eta_a)\eta^2_c+\eta_c^3+\eta_a^3)\sim\mathcal{O}((\eta_c+\eta_a)\eta_c^2)
\end{equation}
The total cost for computing overlaps with VCE becomes
\begin{equation}
    \mathcal{O}((\eta_c+\eta_a)\eta_c^2 + N_a\eta_a^3/\epsilon^2)
\end{equation}
per walker, in which $N_a$ is the total number of modes in the active space.   

Compared with previous matchgate-shadow QC-AFQMC implementations~\cite{huang2024evaluating,zhao2026matchgate}, this work significantly improves the efficiency of VCE from the prefactor perspective. First, we replace the expensive singular value decomposition and its derivatives with cheaper QR decompositions. Second, unlike the previous matchgate-shadow implementations, which orthonormalize the walkers and embed them into a real orthogonal matrix, orbital-rotation shadows accept arbitrary non-orthonormal walkers. This allows us to further skip two additional QR decompositions and their derivatives in the matchgate-shadow VCE algorithm. 

\section{Results}
\label{sec:results}
\subsection{Overlap Errors}
As a first validation of orbital-rotation shadows, we evaluate overlaps between a quantum state and arbitrary non-orthogonal Slater determinants. We choose two simple systems: linear hydrogen chains with 4 and 6 hydrogen atoms, separated by 1\AA. For both systems, we prepare a unitary pair coupled cluster double (UpCCD) state, where instead of optimizing the state parameters, we take amplitudes from the Moller-Plesset second order perturbation theory (MP2). We then apply randomized orbital rotation unitaries to the state and perform one-shot measurements. The measured classical shadows are then used to compute the state's overlap with 100 randomly sampled Slater determinants. We also compute the exact overlaps by forming the $2^N$-dimensional state vector of the prepared quantum state and computing the inner product with the Slater determinant's state vector. The mean absolute error of the shadow computed overlaps with respect to the exact overlaps is shown in Figure \ref{fig:overlap_errors}. As expected, the amount of error reduces as the number of shadows increases. 
\begin{figure}[tbh!]
    \centering
    \includegraphics[width=\columnwidth]{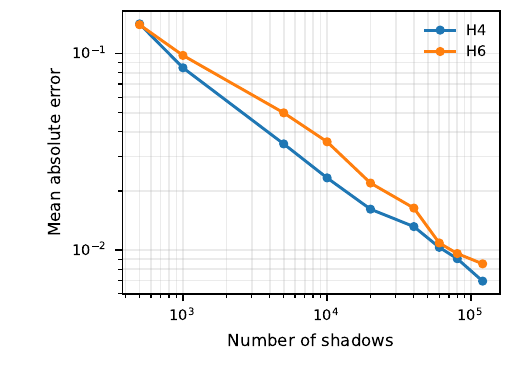}
     \caption{Mean absolute overlap error from orbital-rotation shadows for hydrogen chains with 4 and 6 atoms at a spacing of 1\AA. The trial state is the UpCCD state with MP2 parameters. The overlaps are computed with respect to 100 randomly sampled Slater determinants.}
    \label{fig:overlap_errors}
\end{figure}

\subsection{QC-AFQMC with Orbital Rotation Shadows}
\begin{figure}[tbh!]
    \centering
    \includegraphics[width=\columnwidth]{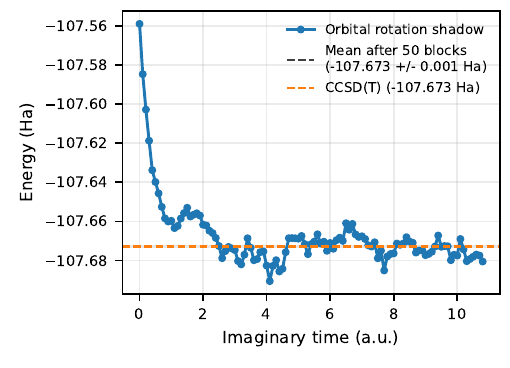}
     \caption{QC-AFQMC energy trajectory for N$_2$ using orbital-rotation shadows with an UpCCD trial state initialized from MP2 parameters. The dashed horizontal line indicates the reference CCSD(T) energy.}
    \label{fig:n2_qc_afqmc}
\end{figure}
Having validated orbital-rotation shadows at the overlap level, we next demonstrate their applicability in QC-AFQMC. As a proof-of-concept calculation, we simulate the N$_2$ molecule in the STO-3G basis with a (6e, 6o) active space. The UpCCD state with MP2 parameters is used as the active-space trial state, and correlation outside the active space is included with VCE. We use 1024 walkers and 20,000 orbital-rotation shadows, and propagate the walkers for 1080 steps with a time step of 0.01~a.u. The comparison with the CCSD(T) energy is shown in \Cref{fig:n2_qc_afqmc}. The converged QC-AFQMC energy obtained with orbital-rotation shadows 
is in agreement with that of CCSD(T) within an error bar of 0.001~Ha.

\subsection{Scaling to Large Systems}
Compared with our earlier matchgate-shadow implementation~\cite{zhao2026matchgate}, orbital-rotation shadows reduce the asymptotic cost of overlap evaluation and simplify post-processing. This efficiency makes it possible to scale QC-AFQMC to larger systems. To illustrate this, we consider a series of $\pi$-conjugated molecules: naphthalene (C$_{10}$H$_8$), anthracene (C$_{14}$H$_{10}$), tetracene (C$_{18}$H$_{12}$), and pentacene (C$_{22}$H$_{14}$). These $\pi$-conjugated systems are high-performance organic semiconductors and are widely used in organic light-emitting diodes and organic photovoltaic cells~\cite{anthony2006acenes}.
To model these molecules, we select an active space containing all $\pi$ electrons. For a molecule with a chemical formula C$_{4n+2}$H$_{2n+4}$, where $n$ is the number of fused benzene rings, this gives an active space of $4n+2$ electrons in $4n+2$ orbitals. We use the cc-pVDZ basis and freeze $1s$-orbitals on the carbon atoms. 
\begin{table*}[t]
\centering
\caption{Predicted QC-AFQMC post-processing runtimes for $N_w=1024$ walkers and $N_{\mathrm{shadow}}=65,536$ shadows for a fixed propagation schedule of 150 blocks and 10 time steps per block (in the last two columns).}
\label{tab:runtime_prediction}
\begin{tabular}{lcccc}
\hline
System
& \multicolumn{1}{c}{Qubits}
& \multicolumn{1}{c}{Time per block (min)}
& \multicolumn{1}{c}{8$\times$ A100 (hr)}
& \multicolumn{1}{c}{320$\times$ H200 (min)} \\
\hline
Naphthalene CAS(10e,10o) & 30 & 14.77 & 36.92 & 39.56 \\
Anthracene CAS(14e,14o)  & 42 & 15.71 & 39.27 & 42.07 \\
Tetracene CAS(18e,18o)   & 54 & 17.85 & 44.63 & 47.82 \\
Pentacene CAS(22e,22o)   & 66 & 21.98 & 54.96 & 58.89 \\
\hline
\end{tabular}
\end{table*}

After selecting the active space, we measure the amount of time it takes to run one block with ten steps of QC-AFQMC, with a various number of walkers and shadows. The runtime is measured on 8$\times$NVIDIA A100 GPUs. 
For these timing benchmarks, we generate pseudo-shadows by applying randomized orbital rotation unitaries to the Hartree-Fock (HF) state, which simply turns it to another Slater determinant. We then sample from the new Slater determinant, which could be done efficiently on classical computers.  

After obtaining the runtime data, we fit a performance model as a linear combination of
physically motivated scaling features. For each benchmark point, the input
variables are the number of active-space orbitals $N_a$, the number of active
electrons $\eta$, the number of occupied orbitals outside the active space $N_o$,
the number of Cholesky vectors $N_{\gamma}$, the number of walkers $N_w$,
and the number of orbital-rotation shadows $N_{\mathrm{shadow}}$.

The fitted model is
\begin{equation}
\begin{split}
    &T_{\mathrm{block}}
=
c_0
+ c_w N_w
+ c_{ws} N_w N_{\mathrm{shadow}} \\
&+ c_{\mathrm{est}} N_{\gamma} N_w N_{\mathrm{shadow}} N_a \eta^2
+ c_{\mathrm{occ}} N_{\gamma} N_w N_o^2 (N_o + N_a).
\end{split}
\end{equation}

Here $T_{\mathrm{block}}$ is the measured wall time for one AFQMC block on the
reference hardware configuration.  The term
$N_{\gamma}N_wN_{\mathrm{shadow}}N_a\eta^2$ captures the dominant orbital-rotation
shadow estimator cost, which scales with the number of Cholesky vectors, walkers,
shadows, active-space orbitals, and active-space electron pairs.  The term
$N_{\gamma}N_wN_o^2(N_o+N_a)$ models the cost associated with processing
the occupied orbitals outside the active space using VCE.  The additional $N_wN_{\mathrm{shadow}}$ term captures
non-asymptotic per-walker/per-shadow overhead, such as data movement and batching
costs, that is visible in the measured data.

For each timing measurement $i$, we construct the feature vector
\begin{equation}
\begin{split}
\mathbf{x}_i =
\big[
&1,\,
N_w^i,\,
N_w^i N_{\mathrm{shadow}}^i,\,
N_{\gamma}^i N_w^i N_{\mathrm{shadow}}^i N_a^i \eta_i^2, \\
&N_{\gamma}^i N_w^i (N_o^i)^2(N_o^i + N_a^i)
\big].
\end{split}
\end{equation}

The coefficients are obtained by ordinary least squares:
\begin{equation}
\mathbf{c}
=
\operatorname*{arg\,min}_{\mathbf{c}}
\sum_i
\left(
T_i - \mathbf{x}_i \cdot \mathbf{c}
\right)^2,
\end{equation}
where
\begin{equation}
\mathbf{c}
=
\left[
c_0,\,
c_w,\,
c_{ws},\,
c_{\mathrm{est}},\,
c_{\mathrm{occ}}
\right].
\end{equation}

After fitting, the model predicts the runtime of a new system by substituting
its values of $(N_a, \eta, N_o, N_{\gamma}, N_w, N_{\mathrm{shadow}})$ into the fitted expression for
$T_{\mathrm{block}}$. 

\begin{figure*}[tbh!]
    \centering
    \includegraphics[width=0.95\textwidth]{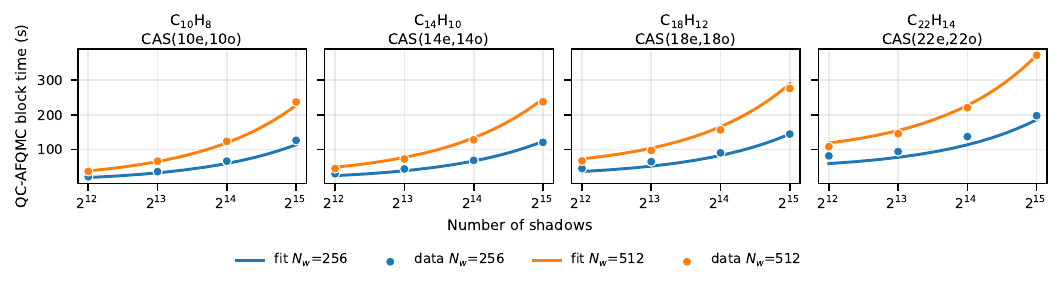}
    \caption{Measured and fitted QC-AFQMC times for 1 block with 10 steps for orbital-rotation shadows across the $\pi$-conjugated benchmark series. Points show measured timings and solid lines show the fitted physical scaling model as a function of the number of shadows for different walker counts.}
    \label{fig:performance_model_fit}
\end{figure*}

\Cref{fig:performance_model_fit} shows the measured timings and fitted curves used to calibrate the performance model and Table~\ref{tab:runtime_prediction} shows the corresponding predicted QC-AFQMC runtimes for the four larger $\pi$-conjugated systems using 65,536 orbital-rotation shadows, 1024 walkers and 150 blocks with 10 steps per block. We assume perfect strong scaling with the number of GPUs and that an NVIDIA H200 is 1.4$\times$ faster than an NVIDIA A100. Under these assumptions, pentacene with a (22e, 22o) active space is predicted to require slightly more than two days on 8 NVIDIA A100 GPUs and less than an hour on 320 NVIDIA H200 GPUs.

To illustrate the significance of performance improvements, we turn to a comparison of runtime projections derived here with a projection that draws on our previous study~\cite{zhao2026matchgate}. We choose Cu$_2$O$_2$(NH$_3$)$_2^{2+}$ with a (24e, 24o) active space as an illustrative example. The previous algorithm would require approximately 1~s on one NVIDIA H200 GPU to execute one shadow and one walker for one block (ten force-bias evaluations and one local-energy evaluation), which translates to approximately one year on 320 NVIDIA H200 GPUs with 1024 walkers,  65,536 shadows and 150 blocks. The corresponding projected runtime with the orbital-rotation-shadow algorithm for the similarly sized pentacene molecule is less than an hour (Table~\ref{tab:runtime_prediction}).

All costs discussed here are classical post-processing costs. To estimate the full end-to-end cost of QC-AFQMC with orbital-rotation shadows, one must also include the quantum shadow measurement time. This measurement time is harder to estimate because it depends strongly on the choice of trial state and on hardware gate speeds. For example, in our previous study, measuring nearly 60,000 matchgate shadows for an UpCCD trial state on a 16-qubit system required approximately one day on the IonQ Forte QPU. For orbital-rotation shadows, the quantum execution cost is expected to be somewhat higher because of the additional ancilla qubits and the non-Clifford nature of the measurement circuits. Future studies should therefore focus on reducing quantum execution time to further accelerate QC-AFQMC.

\section{Conclusions}
\label{sec:conclusions}
In this work, we derived an efficient algorithm for estimating overlaps between arbitrary Slater determinants and quantum states using particle-number-conserving fermionic shadow tomography, or orbital-rotation shadows. We also presented the derivatives required to compute force biases and local energies in QC-AFQMC. We provided proof-of-concept demonstrations for overlap estimation and for an end-to-end QC-AFQMC workflow with orbital-rotation shadows. We also estimated runtimes by benchmarking the algorithm on a series of $\pi$-conjugated systems. Compared with our earlier implementation of matchgate shadows, orbital-rotation shadows provide dramatic improvements in post-processing efficiency. For pentacene with a (22e, 22o) active space in the cc-pVDZ basis, orbital-rotation-shadow QC-AFQMC has a projected post-processing time of less than an hour on the large-GPU configuration considered here.

Beyond efficiency, orbital-rotation shadows open new opportunities. 
For instance, molecular orbitals have additional symmetries, such as spin and point-group symmetries. It is therefore natural to ask whether these symmetries can be used to sample only shadows that contribute significantly to the target observables, thereby reducing the total number of required shadows.

Admittedly, orbital-rotation shadows require additional quantum resources. For an $N$-mode, $\eta$-fermion problem, the construction used here adds $\eta$ ancilla qubits, bringing the total to $N+\eta$, compared with $N$ qubits for matchgate shadows. Low~\cite{low24_08964} proposed reducing this overhead to a single ancilla. However, whether this construction retains the simple overlap and derivative formulas developed here remains to be investigated.

The measurement circuits also differ. The Haar-random orbital rotations used here are dense and generally require arbitrary-angle rotation gates. Although both approaches admit circuit decompositions with $\mathcal{O}(N)$ depth, matchgate shadows can use Clifford measurement ensembles and exploit their sparse signed-permutation structure and fixed rotation angles. Heyraud \textit{et al.}~\cite{heyraud2025unified} also showed how perfect-matching representations can be applied to further reduce gate counts to optimality while preserving the measurement channel and estimator variance. In some cases, these simplifications may make matchgate shadows less expensive to execute than the orbital-rotation circuits considered here, however, the relative cost utlimately depends on the native gate set, connectivity, and compilation.

This work lays the foundation for applying orbital-rotation shadows to efficiently evaluate overlaps between a quantum state and an arbitrary Slater determinant. In QC-AFQMC, where these overlaps are evaluated for every walker at every step, this removes classical post-processing as the dominant cost, leaving quantum execution time for shadow collection as the quantity to reduce. This could be done through more compact circuit design, more efficient fermion-to-qubit mapping, improved sampling strategies, and advanced circuit compilation approaches tailored to a specific qubit modality and architecture. Future work should examine these directions, along with hardware demonstrations of orbital-rotation shadow QC-AFQMC on large strongly correlated systems (e.g., pentacene) beyond the reach of classical approaches. 

\makeatletter
\immediate\write\@auxout{\string\citation{apsrev42Control}}
\makeatother
\bibliographystyle{apsrev4-2}
\bibliography{Journal_Short_Name, biblio, ionq}

@string{ CP = "Chem. Phys." }

@string{ CR = "Chem. Rev." }

@string{ JCP = "J. Chem. Phys." }

@string{ JCTC = "J. Chem. Theory Comput." }

@string{ ME = "Method. Enzymol." }

@string{ N = "Nature" }

@string{ NC = "Nat. Chem." }

@string{ NP = "Nat. Phys." }

@string{ NPJQ = "npj Quantum Inf." }

@string{ NRP = "Nat. Rev. Phys."}

@string{ PR = "Phys. Rev." }

@string{ PRB = "Phys. Rev. B" }

@string{ PRX = "Phys. Rev. X" }

@string{ PRXQ = "PRX Quantum" }

@string{ PRL = "Phys. Rev. Lett." }

@string{ PRR = "Phys. Rev. Research"}

@string{ PRp = "Phys. Rep." }

@string{ PNAS = "Proc. Natl. Acad. Sci. U.S.A." }

@string{ QST = "Quantum Sci. Technol." }

@string{ RMP = "Rev. Mod. Phys." }

@string{ S = "Science" }

@string{ WCMS = "WIREs Comput. Mol. Sci." }

@CONTROL{apsrev42Control,
    author = "08",
    editor = "1",
    pages = "0",
    title = "0",
    year = "1"
  }

@article{heyraud2025unified,
  author = {Heyraud, Valentin and Chomet, H{\'e}loise and Tilly, Jules},
  title = {{Unified framework for matchgate classical shadows}},
  journal = {npj Quantum Information},
  volume = {11},
  number = {1},
  pages = {65},
  year = {2025},
  doi = {10.1038/s41534-025-01015-y},
}

@article{aspuru2005simulated,
  author = {Aspuru-Guzik, Alan and Dutoi, Anthony D. and Love, Peter J. and Head-Gordon, Martin},
  title = {{Simulated Quantum Computation of Molecular Energies}},
  journal = S,
  volume = {309},
  number = {5741},
  pages = {1704--1707},
  year = {2005},
}

@article{peruzzo2014variational,
  author = {Peruzzo, Alberto and McClean, Jarrod and Shadbolt, Peter and Yung, Man-Hong and Zhou, Xiao-Qi and Love, Peter J. and Aspuru-Guzik, Alan and O'Brien, Jeremy L.},
  title = {{A Variational Eigenvalue Solver on a Photonic Quantum Processor}},
  journal = NC,
  volume = {5},
  pages = {4213},
  year = {2014},
}

@article{mcclean2018barren,
  author = {McClean, Jarrod R. and Boixo, Sergio and Smelyanskiy, Vadim N. and Babbush, Ryan and Neven, Hartmut},
  title = {{Barren Plateaus in Quantum Neural Network Training Landscapes}},
  journal = NC,
  volume = {9},
  pages = {4812},
  year = {2018},
}

@article{huang2020predicting,
  author = {Huang, Hsin-Yuan and Kueng, Richard and Preskill, John},
  title = {{Predicting Many Properties of a Quantum System from Very Few Measurements}},
  journal = NP,
  volume = {16},
  pages = {1050--1057},
  year = {2020},
}

@article{zhang2003phaseless,
  author = {Zhang, Shiwei and Krakauer, Henry},
  title = {{Quantum Monte Carlo Method Using Phase-Free Random Walks with Slater Determinants}},
  journal = PRL,
  volume = {90},
  pages = {136401},
  year = {2003},
}

@article{lee2022twenty,
  author = {Lee, Joonho and Pham, Hung Q. and Reichman, David R.},
  title = {{Twenty Years of Auxiliary-Field Quantum Monte Carlo in Quantum Chemistry: An Overview and Assessment on Main Group Chemistry and Bond-Breaking}},
  journal = JCTC,
  volume = {18},
  number = {12},
  pages = {7024--7042},
  year = {2022},
}

@article{huggins2022unbiasing,
  author = {Huggins, William J. and O'Gorman, Bryan A. and Rubin, Nicholas C. and Reichman, David R. and Babbush, Ryan and Lee, Joonho},
  title = {{Unbiasing Fermionic Quantum Monte Carlo with a Quantum Computer}},
  journal = N,
  volume = {603},
  number = {7901},
  pages = {416--420},
  year = {2022},
}

@article{robledo2024chemistry,
  author = {Robledo-Moreno, Javier and Motta, Mario and Haas, Holger and Javadi-Abhari, Ali and Jurcevic, Petar and Kirby, William and Martiel, Simon and Sharma, Kunal and Sharma, Sandeep and Shirakawa, Tomonori and others},
  title = {{Chemistry Beyond Exact Solutions on a Quantum-Centric Supercomputer}},
  journal = {arXiv preprint},
  year = {2024},
  eprint = {2405.05068},
  archivePrefix = {arXiv},
  primaryClass = {quant-ph},
}

@article{wan2023matchgate,
  author = {Wan, Kianna and Huggins, William J. and Lee, Joonho and Babbush, Ryan},
  title = {{Matchgate Shadows for Fermionic Quantum Simulation}},
  journal = CMP,
  volume = {404},
  number = {2},
  pages = {629--700},
  year = {2023},
}

@article{low24_08964,
  author = {Low, Guang Hao},
  title = {{Classical Shadows of Fermions with Particle Number Symmetry}},
  journal = {arXiv preprint},
  year = {2024},
  eprint = {2208.08964},
  archivePrefix = {arXiv},
  primaryClass = {quant-ph}
}

@article{anthony2006acenes,
  author = {Anthony, John E.},
  title = {{Functionalized Acenes and Heteroacenes for Organic Electronics}},
  journal = CR,
  volume = {106},
  number = {12},
  pages = {5028--5048},
  year = {2006},
}

@article{hachmann2007radical,
  author = {Hachmann, Johannes and Dorando, Juan J. and Avil{\'e}s, Michael and Chan, Garnet Kin-Lic},
  title = {{The Radical Character of the Acenes: A Density Matrix Renormalization Group Study}},
  journal = JCP,
  volume = {127},
  pages = {134309},
  year = {2007},
}

@article{keimer2015quantum,
  author = {Keimer, B. and Kivelson, S. A. and Norman, M. R. and Uchida, S. and Zaanen, J.},
  title = {{From Quantum Matter to High-Temperature Superconductivity in Copper Oxides}},
  journal = N,
  volume = {518},
  pages = {179--186},
  year = {2015},
}

@article{lee2006doping,
  author = {Lee, Patrick A. and Nagaosa, Naoto and Wen, Xiao-Gang},
  title = {{Doping a Mott Insulator: Physics of High-Temperature Superconductivity}},
  journal = RMP,
  volume = {78},
  pages = {17--85},
  year = {2006},
}

@article{reiher2017elucidating,
  author = {Reiher, Markus and Wiebe, Nathan and Svore, Krysta M. and Wecker, Dave and Troyer, Matthias},
  title = {{Elucidating Reaction Mechanisms on Quantum Computers}},
  journal = PNAS,
  volume = {114},
  number = {29},
  pages = {7555--7560},
  year = {2017},
}

@article{abrams1999quantum,
  author = {Abrams, Daniel S. and Lloyd, Seth},
  title = {{Quantum Algorithm Providing Exponential Speed Increase for Finding Eigenvalues and Eigenvectors}},
  journal = PRL,
  volume = {83},
  number = {24},
  pages = {5162--5165},
  year = {1999},
}

@article{aaronson2020shadow,
  author = {Aaronson, Scott},
  title = {{Shadow Tomography of Quantum States}},
  journal = {SIAM Journal on Computing},
  volume = {49},
  number = {5},
  pages = {STOC18-368--STOC18-394},
  year = {2020},
}

@article{bauer2020quantum,
  author = {Bauer, Bela and Bravyi, Sergey and Motta, Mario and Chan, Garnet Kin-Lic},
  title = {{Quantum Algorithms for Quantum Chemistry and Quantum Materials Science}},
  journal = CR,
  volume = {120},
  number = {22},
  pages = {12685--12717},
  year = {2020},
}

@article{cao2019quantum,
  author = {Cao, Yudong and Romero, Jonathan and Olson, Jonathan P. and Degroote, Matthias and Johnson, Peter D. and Kieferov{\'a}, M{\'a}ria and Kivlichan, Ian D. and Menke, Tim and Peropadre, Borja and Sawaya, Nicolas P. D. and Sim, Sukin and Veis, Libor and Aspuru-Guzik, Al{\'a}n},
  title = {{Quantum Chemistry in the Age of Quantum Computing}},
  journal = CR,
  volume = {119},
  number = {19},
  pages = {10856--10915},
  year = {2019},
}

@article{cerezo2021variational,
  author = {Cerezo, M. and Arrasmith, Andrew and Babbush, Ryan and Benjamin, Simon C. and Endo, Suguru and Fujii, Keisuke and McClean, Jarrod R. and Mitarai, Kosuke and Yuan, Xiao and Cincio, Lukasz and Coles, Patrick J.},
  title = {{Variational Quantum Algorithms}},
  journal = NRP,
  volume = {3},
  pages = {625--644},
  year = {2021},
}

@article{farhi2014quantum,
  author = {Farhi, Edward and Goldstone, Jeffrey and Gutmann, Sam},
  title = {{A Quantum Approximate Optimization Algorithm}},
  journal = {arXiv preprint},
  year = {2014},
  eprint = {1411.4028},
  archivePrefix = {arXiv},
  primaryClass = {quant-ph}
}

@article{grimsley2019adaptive,
  author = {Grimsley, Harper R. and Economou, Sophia E. and Barnes, Edwin and Mayhall, Nicholas J.},
  title = {{An Adaptive Variational Algorithm for Exact Molecular Simulations on a Quantum Computer}},
  journal = NC,
  volume = {10},
  pages = {3007},
  year = {2019},
}

@article{hempel2018quantum,
  author = {Hempel, Cornelius and Maier, Christine and Romero, Jonathan and McClean, Jarrod and Monz, Thomas and Shen, Heng and Jurcevic, Petar and Lanyon, Ben P. and Love, Peter and Babbush, Ryan and Aspuru-Guzik, Al{\'a}n and Blatt, Rainer and Roos, Christian F.},
  title = {{Quantum Chemistry Calculations on a Trapped-Ion Quantum Simulator}},
  journal = PRX,
  volume = {8},
  pages = {031022},
  year = {2018},
}

@article{kandala2017hardware,
  author = {Kandala, Abhinav and Mezzacapo, Antonio and Temme, Kristan and Takita, Maika and Brink, Markus and Chow, Jerry M. and Gambetta, Jay M.},
  title = {{Hardware-Efficient Variational Quantum Eigensolver for Small Molecules and Quantum Magnets}},
  journal = N,
  volume = {549},
  pages = {242--246},
  year = {2017},
}

@article{kanno2023quantum,
  author = {Kanno, Keita and Kohda, Masaya and Imai, Ryosuke and Koh, Sho and Mitarai, Kosuke and Mizukami, Wataru and Nakagawa, Yuya O.},
  title = {{Quantum-Selected Configuration Interaction: Classical Diagonalization of Hamiltonians in Subspaces Selected by Quantum Computers}},
  journal = {arXiv preprint},
  year = {2023},
  eprint = {2302.11320},
  archivePrefix = {arXiv},
  primaryClass = {quant-ph},
}

@article{mcardle2020quantum,
  author = {McArdle, Sam and Endo, Suguru and Aspuru-Guzik, Al{\'a}n and Benjamin, Simon C. and Yuan, Xiao},
  title = {{Quantum Computational Chemistry}},
  journal = RMP,
  volume = {92},
  pages = {015003},
  year = {2020},
}

@article{romero2018strategies,
  author = {Romero, Jonathan and Babbush, Ryan and McClean, Jarrod R. and Hempel, Cornelius and Love, Peter J. and Aspuru-Guzik, Al{\'a}n},
  title = {{Strategies for Quantum Computing Molecular Energies Using the Unitary Coupled Cluster Ansatz}},
  journal = QST,
  volume = {4},
  number = {1},
  pages = {014008},
  year = {2018},
}

@article{omalley2016scalable,
  author = {O'Malley, P. J. J. and Babbush, Ryan and Kivlichan, Ian D. and Romero, Jonathan and McClean, Jarrod R. and Barends, R. and Kelly, J. and Roushan, P. and Tranter, A. and Ding, N. and Campbell, B. and Chen, Yu and Chen, Zijun and Chiaro, B. and Dunsworth, A. and Fowler, A. G. and Jeffrey, E. and Lucero, E. and Megrant, A. and Mutus, J. Y. and Neeley, M. and Neill, C. and Quintana, C. and Sank, D. and Vainsencher, A. and Wenner, J. and White, T. C. and Coveney, P. V. and Love, P. J. and Neven, H. and Aspuru-Guzik, Al{\'a}n and Martinis, J. M.},
  title = {{Scalable Quantum Simulation of Molecular Energies}},
  journal = PRX,
  volume = {6},
  pages = {031007},
  year = {2016},
}

@article{yu2025quantum,
  author = {Yu, Jeffery and Robledo-Moreno, Javier and Iosue, Joseph T. and Bertels, Luke and Claudino, Daniel and Fuller, Bryce and Groszkowski, Peter and Humble, Travis S. and Jurcevic, Petar and Kirby, William and Maier, Thomas A. and Motta, Mario and Pokharel, Bibek and Seif, Alireza and Shehata, Amir and Sung, Kevin J. and Tran, Minh C. and Tripathi, Vinay and Mezzacapo, Antonio and Sharma, Kunal},
  title = {{Quantum-Centric Algorithm for Sample-Based Krylov Diagonalization}},
  journal = {arXiv preprint},
  year = {2025},
  eprint = {2501.09702},
  archivePrefix = {arXiv},
  primaryClass = {quant-ph},
}

@article{tilly2022variational,
  author = {Tilly, Jules and Chen, Hongxiang and Cao, Shuxiang and Picozzi, Dario and Setia, Kanav and Li, Ying and Grant, Edward and Wossnig, Leonard and Rungger, Ivan and Booth, George H. and Tennyson, Jonathan},
  title = {{The Variational Quantum Eigensolver: A Review of Methods and Best Practices}},
  journal = PR,
  volume = {986},
  pages = {1--128},
  year = {2022},
}

@article{gonthier2022measurements,
  author = {Gonthier, J{\'e}r{\^o}me F. and Radin, Maxwell D. and Buda, Corneliu and Doskocil, Eric J. and Abuan, Clena M. and Romero, Jhonathan},
  title = {{Measurements as a Roadblock to Near-Term Practical Quantum Advantage in Chemistry: Resource Analysis}},
  journal = PRR,
  volume = {4},
  pages = {033154},
  year = {2022},
}

@article{zhang1997constrained,
  author = {Zhang, Shiwei and Carlson, J. and Gubernatis, J. E.},
  title = {{Constrained Path Monte Carlo Method for Fermion Ground States}},
  journal = PRB,
  volume = {55},
  number = {12},
  pages = {7464--7477},
  year = {1997},
}

@article{motta2018abinitio,
  author = {Motta, Mario and Zhang, Shiwei},
  title = {{Ab Initio Computations of Molecular Systems by the Auxiliary-Field Quantum Monte Carlo Method}},
  journal = WCMS,
  volume = {8},
  pages = {e1364},
  year = {2018},
}

@article{shee2023transformative,
  author = {Shee, James and Weber, John L. and Reichman, David R. and Friesner, Richard A. and Zhang, Shiwei},
  title = {{On the Potentially Transformative Role of Auxiliary-Field Quantum Monte Carlo in Quantum Chemistry: A Highly Accurate Method for Transition Metals and Beyond}},
  journal =JCP,
  volume = {158},
  pages = {140901},
  year = {2023},
}

@article{troyer2005complexity,
  author = {Troyer, Matthias and Wiese, Uwe-Jens},
  title = {{Computational Complexity and Fundamental Limitations to Fermionic Quantum Monte Carlo Simulations}},
  journal = PRL,
  volume = {94},
  pages = {170201},
  year = {2005},
}

@article{elben2023randomized,
  author = {Elben, Andreas and Flammia, Steven T. and Huang, Hsin-Yuan and Kueng, Richard and Preskill, John and Vermersch, Beno{\^i}t and Zoller, Peter},
  title = {{The Randomized Measurement Toolbox}},
  journal = NRP,
  volume = {5},
  pages = {9--24},
  year = {2023},
}

@article{huang2024evaluating,
  author = {Huang, Benchen and Chen, Yi-Ting and Gupt, Brajesh and Suchara, Martin and Tran, Anh and McArdle, Sam and Galli, Giulia},
  title = {{Evaluating a Quantum-Classical Quantum Monte Carlo Algorithm with Matchgate Shadows}},
  journal = PRR,
  volume = {6},
  pages = {043063},
  year = {2024},
}

@article{zhao2026matchgate,
  author = {Zhao, Luning and Goings, Joshua J. and Aboumrad, Willie and Arrasmith, Andrew and Calderin, Lazaro and Churchill, Spencer and Gabay, Dor and Harvey-Brown, Thea and Hiles, Melanie and others},
  title = {{Quantum-Classical Auxiliary Field Quantum Monte Carlo with Matchgate Shadows on Trapped Ion Quantum Computers}},
  journal = PRR,
  volume = {8},
  pages = {033061},
  year = {2026},
}

@article{nappi2026qcafqmc,
  title={Quantum-Classical Auxiliary-Field Quantum Monte Carlo at the Edge of Practicability}, 
  author={Francesco Nappi and Matthew Kiser and Fedor Šimkovic IV},
  journal = {arXiv preprint},
  year={2026},
  eprint={2606.19239},
  archivePrefix={arXiv},
  primaryClass={quant-ph},
  url={https://arxiv.org/abs/2606.19239}, 
}

@article{reinholdt2025critical,
  author = {Reinholdt, Peter and Ziems, Karl Michael and Kjellgren, Erik Rosendahl and Coriani, Sonia and Sauer, Stephan P. A. and Kongsted, Jacob},
  title = {{Critical Limitations in Quantum-Selected Configuration Interaction Methods}},
  journal =JCTC,
  volume = {21},
  number = {14},
  pages = {6811--6822},
  year = {2025},
}

@article{Zhao20_174105,
    author = {P. R. C. Kent and A. Annaberdiyev and A. Benali and M. C. Bennet and E. J. L. Borda and P. Doak and et al},
    journal = JCP,
    title = {{QMCPACK: Advances in the Development, Efficiency, and Application of Auxiliary Field and Real-Space Variational and Diffusion Quantum Monte Carlo}},
    volume = {152},
    pages = {174105},
    year = {2020},
}

@article{Gambetta17_23879,
    author = {A. Kandala and A. Mezzacapo and K. Temme and M. Takita and M. Brink and J. M. Chow and J. M. Gambetta},
    journal = N,
    title = {{Hardware-Efficient Variational Quantum Eigensolver for Small Molecules and Quantum Magnets}},
    volume = {549},
    pages = {242-246},
    year = {2017},
}

@article{Neven18_041015,
    author = {R. Babbush and C. Gidney and D. W. Berry and N. Wiebe and J. McClean and A. Paler and A. Fowler and H. Neven},
    title = {{Encoding Electronic Spectra in Quantum Circuits with Linear T Complexity}},
    journal = PRX,
    volume = {8},
    pages = {041015},
    year = {2018},
}

@article{Babbush21_030305,
    author = {J. Lee and D. W. Berry and C. Gidney and W. J. Huggins and J. R. McClean and N. Wiebe and R. Babbush},
    title = {{Even More Efficient Quantum Computations of Chemistry Through Tensor Hypercontraction}},
    journal = PRXQ, 
    volume = {2},
    pages = {030305},
    year = {2021},
}

@article{Zhao22_60,
  title = {{Orbital-Optimized Pair-Correlated Electron Simulations on Trapped-Ion Quantum Computers}},
  author = {L. Zhao and J. Goings and K. Wright and J. Nguyen and J. Kim and S. Johri and K. Shin and W. Kyoung and J. I. Fuks and J-K. K. Rhee and Y. M. Rhee},
  journal = NPJQ,
  volume = {9},
  number = {60},
  year = {2023},
}

@article{Umrigar143_164105,
  author = {C. J. Umrigar},
  title = {{Observations on Variational and Projector Monte Carlo Methods}},
  journal = JCP,
  volume = {143},
  pages = {164105},
  year = {2015},
}

@article{Miyake24_57,
  author = {A. Zhao and A. Miyake},
  title = {{Group-Theoretic Error Mitigation Enabled by Classical Shadows and Symmetries}},
  journal = NPJQ,
  volume = {10},
  pages = {57},
  year = {2024},
}

@article{Zhao21_110504,
  author = {A. Zhao and N. C. Rubin and A. Miyake},
  title = {{Fermionic Partial Tomography via Classical Shadows}},
  journal = PRL,
  volume = {127},
  pages = {110504},
  year = {2021},
}

@article{Larocca26_27254,
  author = {M. West and M. Cerezo and M. Larocca},
  title = {{Particle-Preserving Fermionic Shadows with Mode-Independent Sample Complexity}},
  journal = {Preprint at https://doi.org/10.48550/arXiv.2606.27254},
  year = {2026},
  }

@article{Koonin97_1,
title = {{Shell model Monte Carlo methods}},
journal = PRp,
volume = {278},
number = {1},
pages = {1-77},
year = {1997},
issn = {0370-1573},
author = {S.E. Koonin and D.J. Dean and K. Langanke},
}

@article{Motta26_01138,
  title = {{Crossing the 12,000-Atom Barrier with Heterogeneous Quantum-Classical Supercomputing: Quantum Chemistry of Protein-Ligand Complexes}},
  author = {K. M. Merz Jr and A. Shajan and D, Kaliakin and F. Liang and Y. Otsuka and T. Shirakawa and L. Broers and H. Xu and M. Tsuji and M. Sato and S. Yunoki and R. Wakizaka and Y. Kawashima and J. Doi and T. Itoko and H. Horii and T. Pellegrini and J. R. Moreno and K. J. Sung and E. Fejer and R. Walkup and S. Seelam and M. Motta},
  journal = {Preprint at 
  https://doi.org/10.48550/arXiv.2605.01138},
  year = {2026},
}

@article{Patel25_7490,
  author = {S. Patel and P. Jayakumar and T-C. Yen and A. F. Izmaylov},
  title = {{Quantum Measurement for Quantum Chemistry on a Quantum Computer}},
  journal = CP,
  volume = {126},
  number = {16},
  pages = {7490-7524},
  year = {2025},
}

@article{Yuri25_111335,
    author = {Y. Alexeev and V. S. Batista and N. Bauman and L. Bertels and D. Claudino and R. Dutta and L. Gagliardi and S. Godwin and N. Govind and M. Head-Gordon and M. R. Hermes and K. Kowalski and A. Li and C. Liu and J. Liu and P. Liu and J. M. Garc\'{i}a-Lastra and D. Mejia-Rodriguez and K. Mueller and M. Otten and B. Peng and M. Raugas and M. Reiher and P. Rigor and W. J. Shaw and M. van Schilfgaarde and T. Vegge and Y. Zhang and M. Zheng and L. Zhu},
    title = {{A Perspective on Quantum Computing Applications in Quantum Chemistry Using 25–100 Logical Qubits}},
    journal = JCTC,
    volume = {21},
    number = {22},
    pages = {11335-11357},
    year = {2025},
}

\onecolumngrid
\appendix
\section{Variance Bound of the Overlap Estimator}
\label{app:ov-variance}

We prove Theorem~\ref{thm:ov-variance} by evaluating the Born-weighted second moment of the estimator. Throughout this appendix, $1\leq\eta\leq N$, $N_{\mathrm{aug}}=N+\eta$, and the physical trial state $|\Psi_T\rangle$ is normalized and has fixed particle number $\eta$. We first take the target Slater determinant $|\phi\rangle$ to be normalized, with $\Phi^\dagger\Phi=I_\eta$. The measurement rotations are sampled independently from the complex Haar measure on $\mathcal U_{N_{\mathrm{aug}}}$, and occupation outcomes follow the ideal Born distribution of the augmented state $|\Psi_{\mathrm{aug}}\rangle=(|\Psi_T\rangle+|\vec a\rangle)/\sqrt{2}$. Here $\vec a=N+[\eta]$, $\Phi\in\mathbb C^{N\times\eta}$, and $|\Psi_T\rangle$ denotes its embedding with unoccupied ancillary modes, as above.

Retaining the overlap convention $o_\phi=\langle\phi|\Psi_T\rangle$, define the complex variance by
\begin{equation}
 \operatorname{Var}_{\mathbb C}(\widehat O_{\vec q,w})
 =\mathbb E_{u,\vec z}\!\left[
 |\widehat O_{\vec q,w}(u,\vec z)-o_\phi|^2\right].
 \label{eq:ov-var-definition}
\end{equation}
Let $p_j$ be the probability that $|\Psi_T\rangle$ has exactly $j$ particles outside the occupied one-particle subspace of $|\phi\rangle$. In the basis adapted to that subspace,
\begin{equation}
 p_j=\sum_{\substack{\vec r\in \SN{N}{\eta}\\
 |\vec r\setminus\vec q|=j}}
 |\langle\vec r|\Psi_T\rangle|^2,
 \qquad \sum_{j=0}^{\eta}p_j=1,
 \qquad p_0=|o_\phi|^2.
 \label{eq:ov-pj}
\end{equation}
Probabilities for sectors that do not exist are zero. Set
\begin{equation}
 b_{N,\eta}=(\eta+1)\frac{N+2}{N+\eta+2}.
 \label{eq:ov-b}
\end{equation}

We will establish the exact identities
\begin{align}
 \mathbb E_{u,\vec z}|\widehat O_{\vec q,w}|^2
 &=2b_{N,\eta}\left(1+\sum_{j=0}^{\eta}\frac{p_j}{j+1}\right),
 \label{eq:ov-exact-second}\\
 \operatorname{Var}_{\mathbb C}(\widehat O_{\vec q,w})
 &=2b_{N,\eta}\left(1+\sum_{j=0}^{\eta}\frac{p_j}{j+1}\right)-p_0.
 \label{eq:ov-exact-var}
\end{align}
These identities imply the tight bound in Theorem~\ref{thm:ov-variance}.

\subsection{Measurement Records and Born-Weighted Averaging}

We begin by connecting the matrix determinant estimator to the measurement distribution. Write
\begin{equation}
 d_j=\binom{N_{\mathrm{aug}}+j}{\eta},\qquad j=0,1,2.
 \label{eq:ov-dj}
\end{equation}
For a fixed record $(u,\vec z)$, define the back-rotated matrix determinant and its orbital frame as
\begin{equation}
 |V\rangle=U_\eta(u)^\dagger|\vec z\rangle,
 \qquad V=(u^\dagger)_{:,\vec z}\in\mathbb C^{N_{\mathrm{aug}}\times\eta},
 \qquad V^\dagger V=I_\eta.
 \label{eq:ov-frame}
\end{equation}
For $\vec r\in \SN{N_{\mathrm{aug}}}{\eta}$, let
\begin{equation}
 v_{\vec r}=\langle\vec r|V\rangle=\det V_{\vec r,:},
 \qquad x_{\vec r}=|v_{\vec r}|^2.
 \label{eq:ov-minors}
\end{equation}
The projector associated with this record is $|V\rangle\langle V|$. Since $\vec z$ is sampled after applying $U_\eta(u)$, its conditional probability is
\begin{equation}
 p(\vec z\mid u)
 =|\langle\vec z|U_\eta(u)|\Psi_{\mathrm{aug}}\rangle|^2
 =|\langle V|\Psi_{\mathrm{aug}}\rangle|^2.
 \label{eq:ov-born}
\end{equation}
Thus, for any integrable function $F$ of the orbital frame,
\begin{align}
 \mathbb E_{u,\vec z}F(V)
 &=\sum_{\vec z\in \SN{N_{\mathrm{aug}}}{\eta}}\mathbb E_u\!\left[
 p(\vec z\mid u)F\big((u^\dagger)_{:,\vec z}\big)\right]
 \notag\\
 &=d_0\,\mathbb E_V\!\left[|\langle V|\Psi_{\mathrm{aug}}\rangle|^2F(V)\right].
 \label{eq:ov-born-average}
\end{align}
On the right, $V$ is a uniformly distributed orthonormal frame. For every fixed $\vec z$, the selected columns $(u^\dagger)_{:,\vec z}$ have the same Haar distribution; the sum therefore contributes $d_0$ identical expectations. No independence between different outcomes of the same rotation is required. Equation~\eqref{eq:ov-born-average} distinguishes the uniform frame measure from the distribution of frames observed in actual measurement records. Its normalization follows by setting $F=1$.

A physical change of basis by $w^\dagger$ maps $|\phi\rangle=U_\eta(w)|\vec q\rangle$ to $|\vec q\rangle$ while leaving $|\vec a\rangle$ unchanged. The accompanying replacement $u\mapsto uW$, with $W=w\oplus I_\eta$, preserves Haar measure and all record probabilities after the state is transformed. We may therefore work in this adapted basis, relabel the transformed state and rotation as $|\Psi_T\rangle,u$, and set $w=I$ for the calculation. The patterns $\vec q$ and $\vec a$ remain disjoint.

The estimator previously derived from the inverse measurement channel can then be written as
\begin{align}
 \widehat O_{\vec q,I}(u,\vec z)
 &=2d_1\det\!\left[(U_o\Phi)^\dagger U_a\right]
 =2d_1v_{\vec q}v_{\vec a}^*.
 \label{eq:ov-estimator}
\end{align}
Indeed, in this basis $\Phi$ selects $\vec q$, so $(U_o\Phi)^\dagger U_a=V_{\vec q,:}V_{\vec a,:}^\dagger$. Equivalently, $v_{\vec q}v_{\vec a}^*=\operatorname{Tr}_{\wedge^\eta}[D_{\vec q}^{\vec a}|V\rangle\langle V|]$, with $D_{\vec q}^{\vec a}=|\vec a\rangle\langle\vec q|$. The factor $2$ recovers $o_\phi$ from the augmented coherence $\langle\Psi_{\mathrm{aug}}|D_{\vec q}^{\vec a}|\Psi_{\mathrm{aug}}\rangle=o_\phi/2$; the factor $d_1$ inverts the corresponding channel eigenvalue. Its normalization is also verified below.

\subsection{Diagonal Second-Moment Operator}

Define the positive operator
\begin{equation}
 \mathcal B_{\vec q,\vec a}
 =4d_0d_1^2\,\mathbb E_V\!\left[
 x_{\vec q}x_{\vec a}|V\rangle\langle V|\right].
 \label{eq:ov-second-operator}
\end{equation}
Equations~\eqref{eq:ov-born-average} and \eqref{eq:ov-estimator} imply
\begin{equation}
 \mathbb E_{u,\vec z}|\widehat O_{\vec q,I}|^2
 =\langle\Psi_{\mathrm{aug}}|\mathcal B_{\vec q,\vec a}|\Psi_{\mathrm{aug}}\rangle.
 \label{eq:ov-operator-expectation}
\end{equation}
We now show explicitly that $\mathcal B_{\vec q,\vec a}$ is diagonal in the occupation basis. For two patterns $\vec r,\vec s\in \SN{N_{\mathrm{aug}}}{\eta}$, its matrix element is
\begin{equation}
 \langle\vec r|\mathcal B_{\vec q,\vec a}|\vec s\rangle
 =4d_0d_1^2\,\mathbb E_V\!\left[
 x_{\vec q}(V)x_{\vec a}(V)v_{\vec r}(V)v_{\vec s}(V)^*\right].
 \label{eq:ov-offdiagonal-element}
\end{equation}
The explicit arguments emphasize that all four factors are functions of the same random frame. Consider the diagonal one-particle unitary
\begin{equation}
 R_{\boldsymbol\theta}
 =\operatorname{diag}(e^{i\theta_1},\ldots,e^{i\theta_{N_{\mathrm{aug}}}}),
 \qquad V\longmapsto R_{\boldsymbol\theta}V.
 \label{eq:ov-mode-phase}
\end{equation}
Each row $p$ of $V$ is multiplied by $e^{i\theta_p}$. Multilinearity of the matrix determinant therefore implies
\begin{equation}
 v_{\vec r}(R_{\boldsymbol\theta}V)
 =\det(R_{\boldsymbol\theta}V)_{\vec r,:}
 =e^{i\vartheta_{\vec r}}v_{\vec r}(V),
 \qquad \vartheta_{\vec r}=\sum_{p\in\vec r}\theta_p.
 \label{eq:ov-minor-phase}
\end{equation}
The squared magnitudes satisfy $x_{\vec r}(R_{\boldsymbol\theta}V)=x_{\vec r}(V)$. In particular, the weight $x_{\vec q}x_{\vec a}$ in Eq.~\eqref{eq:ov-offdiagonal-element} is unchanged even if the phases on the modes in $\vec q$ and $\vec a$ are varied independently.

Since $R_{\boldsymbol\theta}$ is unitary, $R_{\boldsymbol\theta}V$ has exactly the same Haar frame distribution as $V$. We may thus replace $V$ by $R_{\boldsymbol\theta}V$ inside the expectation in Eq.~\eqref{eq:ov-offdiagonal-element}. Using Eq.~\eqref{eq:ov-minor-phase} gives
\begin{align}
 \langle\vec r|\mathcal B_{\vec q,\vec a}|\vec s\rangle
 &=4d_0d_1^2\,\mathbb E_V\!\left[
 x_{\vec q}(V)x_{\vec a}(V)
 e^{i\vartheta_{\vec r}}v_{\vec r}(V)
 e^{-i\vartheta_{\vec s}}v_{\vec s}(V)^*\right]
 \notag\\
 &=e^{i(\vartheta_{\vec r}-\vartheta_{\vec s})}
 \langle\vec r|\mathcal B_{\vec q,\vec a}|\vec s\rangle.
 \label{eq:ov-phase-selection}
\end{align}
This identity holds for every choice of the $N_{\mathrm{aug}}$ mode phases. If $\vec r\ne\vec s$, their occupation patterns differ on at least one mode $p$. Choose $\theta_p=\pi$ on that mode and set all other phases to zero. Exactly one of $\vartheta_{\vec r}$ and $\vartheta_{\vec s}$ then contains $\pi$, so the phase factor in Eq.~\eqref{eq:ov-phase-selection} is $-1$. The matrix element must equal its own negative and hence vanish:
\begin{equation}
 \langle\vec r|\mathcal B_{\vec q,\vec a}|\vec s\rangle=0
 \quad\text{whenever }\vec r\ne\vec s.
 \label{eq:ov-offdiagonal-zero}
\end{equation}
For $\vec r=\vec s$, the phase factor is always one and imposes no such constraint. Thus
\begin{equation}
 \mathcal B_{\vec q,\vec a}
 =\sum_{\vec r\in \SN{N_{\mathrm{aug}}}{\eta}}
 \gamma_{\vec r}|\vec r\rangle\langle\vec r|.
 \label{eq:ov-diagonal-decomposition}
\end{equation}
Equivalently, the operator commutes with all independent mode-phase rotations, whose joint eigenvectors are the distinct occupation Slater determinants. A single common phase on all modes would not suffice: it acts identically on every state in the fixed-$\eta$ sector. This argument is a change of variables in the Haar expectation, not an additional operation in the measurement protocol. It also makes no assumption that the trial-state density matrix is diagonal; the trial state enters separately through Eq.~\eqref{eq:ov-operator-expectation}. The diagonal entries are
\begin{equation}
 \gamma_{\vec r}
 =\langle\vec r|\mathcal B_{\vec q,\vec a}|\vec r\rangle
 =4d_0d_1^2\,\mathbb E_V[x_{\vec q}x_{\vec a}x_{\vec r}].
 \label{eq:ov-eigenvalue-moment}
\end{equation}
Permutations within $\vec q$, within $\vec a$, and within their complement imply that this eigenvalue depends only on
\begin{equation}
 k=|\vec r\cap\vec q|,\qquad
 \ell=|\vec r\cap\vec a|,\qquad
 m=\eta-k-\ell.
 \label{eq:ov-sectors}
\end{equation}
Present sectors satisfy $k,\ell,m\geq0$ and $m\leq N_{\mathrm{aug}}-2\eta$.

\subsection{Gaussian Evaluation of the Three-Minor Moment}

The preceding subsection reduces the second-moment problem to a scalar moment calculation. Equation~\eqref{eq:ov-eigenvalue-moment} expresses each eigenvalue of $\mathcal B_{\vec q,\vec a}$ as $4d_0d_1^2$ times $\mathbb E_V[x_{\vec q}x_{\vec a}x_{\vec r}]$, and Eq.~\eqref{eq:ov-sectors} shows that the occupation pattern $\vec r$ enters only through the counts $k$, $\ell$, and $m$. Diagonality identifies the eigenvectors but does not determine these eigenvalues. We must therefore evaluate this three-minor moment for each allowed sector before finding the largest eigenvalue and the variance on $|\Psi_{\mathrm{aug}}\rangle$.

The orthonormality constraint on $V$ correlates its rows, so the three squared minors cannot be averaged independently. We evaluate their joint moment by introducing a standard complex-Gaussian matrix $\mathcal G=V\Xi^{1/2}$. Its orthonormal factor $V$ has the required Haar distribution and is independent of the positive factor $\Xi$. Each squared minor of $\mathcal G$ equals the corresponding $x_{\vec r}$ times $\det\Xi$. Thus the desired Haar moment is obtained by dividing a three-minor Gaussian moment by $\mathbb E(\det\Xi)^3$. The Gaussian calculation is simpler because disjoint row sets are independent, while the shared rows are specified exactly by $k$ and $\ell$.

We first establish the Gaussian Gram-determinant identities needed for this conversion and for integrating the unshared rows. We give the required part of the complex Bartlett decomposition explicitly, including the Gaussian normalization and the independence of its factors.

Let $F\in\mathbb C^{p\times h}$, $p\leq h$, have independent standard complex-Gaussian entries. Explicitly, each entry has the form
\begin{equation}
 F_{ij}=\frac{\xi_{ij}+i\zeta_{ij}}{\sqrt{2}},
 \qquad \xi_{ij},\zeta_{ij}\overset{\mathrm{iid}}{\sim}\mathcal N(0,1).
\end{equation}
Thus $\mathbb E F_{ij}=0$ and $\mathbb E|F_{ij}|^2=1$. The real and imaginary parts each have variance $1/2$. With this convention, $FF^\dagger$ is a complex Wishart matrix. Since the Gaussian distribution is continuous and $p\leq h$, $F$ has full row rank almost surely.

Write the columns of $F^\dagger$ as $f_1,\ldots,f_p\in\mathbb C^h$. Gram--Schmidt defines the residual vectors and their normalized directions by
\begin{equation}
 z_j=\left(I_h-\sum_{i=1}^{j-1}q_iq_i^\dagger\right)f_j,
 \qquad q_j=\frac{z_j}{\|z_j\|},
 \qquad j=1,\ldots,p,
\end{equation}
where the sum is empty at $j=1$. The resulting thin QR factorization is $F^\dagger=QR$, with $Q^\dagger Q=I_p$ and $R$ upper triangular. Choose its diagonal entries positive, so that $R_{jj}=\|z_j\|$. Then
\begin{equation}
 FF^\dagger=R^\dagger R,
 \qquad
 \det(FF^\dagger)=|\det R|^2
 =\prod_{j=1}^{p}\|z_j\|^2.
\end{equation}
Define $T_j=\|z_j\|^2$. To calculate the matrix determinant moment, we must determine the distributions of the $T_j$ and show that they are independent.

Condition on the preceding columns $f_1,\ldots,f_{j-1}$. Their span has dimension $j-1$ almost surely, and the projection defining $z_j$ has rank $h-j+1$. Choose an orthonormal coordinate system adapted to this span. The vector $f_j$ is independent of the preceding columns and its standard complex-Gaussian distribution is invariant under unitary rotations. In the adapted coordinates it therefore still has independent standard complex-Gaussian components. The projection removes the first $j-1$ components, leaving
\begin{equation}
 T_j\ \overset{d}{=}\ \sum_{\alpha=1}^{h-j+1}|g_\alpha|^2,
\end{equation}
where the $g_\alpha$ are independent standard complex-Gaussian scalars. For one such scalar $g$, the density on the complex plane is $\pi^{-1}e^{-|g|^2}$. Polar coordinates give, for $t\geq0$,
\begin{equation}
 \Pr(|g|^2>t)=\int_{\sqrt{t}}^\infty 2r e^{-r^2}\,dr=e^{-t}.
\end{equation}
Thus $|g|^2$ is exponential with unit mean, and the sum of $h-j+1$ such independent variables is gamma-distributed:
\begin{equation}
 T_j\mid(f_1,\ldots,f_{j-1})
 \sim\operatorname{Gamma}(h-j+1,1).
\end{equation}
Here the gamma distribution is parameterized by shape and scale. This conditional distribution does not depend on the values of the preceding columns. Hence $T_j$ is independent of those columns, and therefore of $T_1,\ldots,T_{j-1}$. The squared residual lengths are mutually independent; this is the diagonal part of the complex Bartlett decomposition.

If $T\sim\operatorname{Gamma}(s,1)$, then for an integer $a\geq0$,
\begin{equation}
 \mathbb E[T^a]
 =\frac{1}{\Gamma(s)}\int_0^\infty t^{s+a-1}e^{-t}\,dt
 =\frac{\Gamma(s+a)}{\Gamma(s)}
 =\prod_{b=0}^{a-1}(s+b).
\end{equation}
The symbol $\Gamma(s)$ in this expression denotes the Euler gamma function. Applying this identity with $s=h-j+1$ and using the independence just proved gives
\begin{equation}
 \mathbb E[\det(FF^\dagger)^a]
 =\mathbb E\!\left[\prod_{j=1}^{p}T_j^a\right]
 =\prod_{j=1}^{p}\mathbb E[T_j^a].
\end{equation}
Consequently,
\begin{equation}
 \mathbb E\!\left[\det(FF^\dagger)^a\right]
 =\prod_{j=1}^{p}\prod_{b=0}^{a-1}(h-j+1+b).
 \label{eq:ov-wishart}
\end{equation}
In particular,
\begin{equation}
 \mathbb E\det(FF^\dagger)=\frac{h!}{(h-p)!},\qquad
 \mathbb E\det(FF^\dagger)^2
 =\frac{h!(h+1)!}{(h-p)!(h-p+1)!}.
 \label{eq:ov-wishart-two}
\end{equation}
Empty matrix determinants and empty products are one.

Let $\mathcal G\in\mathbb C^{N_{\mathrm{aug}}\times\eta}$ be standard complex Gaussian, and use its polar decomposition
\begin{equation}
 \mathcal G=V\Xi^{1/2},\qquad \Xi=\mathcal G^\dagger\mathcal G.
 \label{eq:ov-polar}
\end{equation}
To justify the distribution and independence, observe that for every fixed
$R\in\mathcal U_{N_{\mathrm{aug}}}$, the matrices $R\mathcal G$ and
$\mathcal G$ have the same distribution. Left multiplication changes the
polar factors by $(V,\Xi)\mapsto(RV,\Xi)$. Thus the conditional law of
$V$ given $\Xi$ is invariant under every left unitary rotation. The unique
invariant probability measure on orthonormal frames is Haar measure, so
this conditional law does not depend on $\Xi$. Hence $V$ is Haar-distributed
and independent of $\Xi$. For every occupation pattern $\vec r$,
\begin{equation}
 |\det\mathcal G_{\vec r,:}|^2=x_{\vec r}\det\Xi.
 \label{eq:ov-radial-minor}
\end{equation}
Equation~\eqref{eq:ov-wishart} gives
\begin{align}
 \mathbb E(\det\Xi)^3
 &=\prod_{j=1}^{\eta}(N_{\mathrm{aug}}-j+1)(N_{\mathrm{aug}}-j+2)(N_{\mathrm{aug}}-j+3)
 \notag\\
 &=(\eta!)^3d_0d_1d_2.
 \label{eq:ov-radial-third}
\end{align}
Thus, writing $\mathcal G_{\vec r}=\mathcal G_{\vec r,:}$,
\begin{equation}
 \mathbb E_V[x_{\vec q}x_{\vec a}x_{\vec r}]
 =\frac{\mathbb E[|\det\mathcal G_{\vec q}|^2
 |\det\mathcal G_{\vec a}|^2|\det\mathcal G_{\vec r}|^2]}
 {(\eta!)^3d_0d_1d_2}.
 \label{eq:ov-gaussian-conversion}
\end{equation}

To evaluate the numerator, let $F_q$, $F_a$, and $F_0$ contain the rows in $\vec r\cap\vec q$, $\vec r\cap\vec a$, and $\vec r\setminus(\vec q\cup\vec a)$, respectively. They have $k$, $\ell$, and $m$ rows and $\eta$ columns. Let $R_q$ and $R_a$ contain the remaining rows of $\vec q$ and $\vec a$, respectively. These five matrices are independent. Up to row permutations,
\begin{equation}
 \mathcal G_{\vec q}=\begin{bmatrix}F_q\\R_q\end{bmatrix},\qquad
 \mathcal G_{\vec a}=\begin{bmatrix}F_a\\R_a\end{bmatrix},\qquad
 \mathcal G_{\vec r}=\begin{bmatrix}F_q\\F_a\\F_0\end{bmatrix}.
 \label{eq:ov-row-partition}
\end{equation}
Row-permutation signs disappear in squared matrix determinant magnitudes.

For a fixed full-row-rank $F\in\mathbb C^{p\times\eta}$ and an independent standard Gaussian $Z\in\mathbb C^{(\eta-p)\times\eta}$,
\begin{equation}
 \mathbb E_Z\left|\det\begin{bmatrix}F\\Z\end{bmatrix}\right|^2
 =(\eta-p)!\det(FF^\dagger).
 \label{eq:ov-conditional-det}
\end{equation}
The expectation in Eq.~\eqref{eq:ov-conditional-det} is over $Z$ only; the rows of $F$ are held fixed. We derive it by isolating the part of the Gaussian rows perpendicular to the row space of $F$. Put $\nu=\eta-p$. Choose $Q_\parallel\in\mathbb C^{\eta\times p}$ with orthonormal columns spanning the range of $F^\dagger$, and $Q_\perp\in\mathbb C^{\eta\times\nu}$ with orthonormal columns spanning the kernel of $F$. Then
\begin{equation}
 Q=[Q_\parallel\;Q_\perp]\in\mathbb C^{\eta\times\eta},
 \qquad Q^\dagger Q=QQ^\dagger=I_\eta,
 \qquad FQ=[L\;0],\quad L=FQ_\parallel.
\end{equation}
The $p\times p$ matrix $L$ is nonsingular because $F$ has full row rank. This change of coordinates preserves the fixed Gram matrix:
\begin{equation}
 FF^\dagger=FQQ^\dagger F^\dagger=LL^\dagger,
 \qquad |\det L|^2=\det(FF^\dagger).
\end{equation}
For the random rows, write
\begin{equation}
 ZQ=[Z_\parallel\;Z_\perp],
 \qquad Z_\parallel=ZQ_\parallel\in\mathbb C^{\nu\times p},
 \qquad Z_\perp=ZQ_\perp\in\mathbb C^{\nu\times\nu}.
\end{equation}
Since $F$ is fixed, $Q$ is fixed when averaging over $Z$. Each row of $Z$ has density proportional to $e^{-\|z\|^2}$, invariant under right multiplication by a unitary. Therefore $ZQ$ has the same joint distribution as $Z$: its entries remain independent standard complex Gaussians. In particular, $Z_\perp$ is a square standard complex-Gaussian matrix. Multiplying the stacked matrix by $Q$ gives a block-triangular matrix,
\begin{equation}
 \begin{bmatrix}F\\Z\end{bmatrix}Q
 =\begin{bmatrix}L&0\\Z_\parallel&Z_\perp\end{bmatrix}.
\end{equation}
As $|\det Q|=1$, its squared matrix determinant satisfies the pointwise identity
\begin{equation}
 \left|\det\begin{bmatrix}F\\Z\end{bmatrix}\right|^2
 =|\det L|^2|\det Z_\perp|^2
 =\det(FF^\dagger)|\det Z_\perp|^2.
\end{equation}
Thus the random components $Z_\parallel$ parallel to the fixed row space make no contribution to this matrix determinant. The only remaining average is the squared matrix determinant of the $\nu\times\nu$ matrix $Z_\perp$.

This average equals $\nu!$, either by Eq.~\eqref{eq:ov-wishart-two} or directly by expanding the matrix determinant. To verify it without another matrix identity, let $\mathfrak S_\nu$ denote the permutation group and write
\begin{equation}
 \det Z_\perp
 =\sum_{\sigma\in\mathfrak S_\nu}\operatorname{sgn}(\sigma)
 \prod_{i=1}^{\nu}(Z_\perp)_{i,\sigma(i)}.
\end{equation}
Independence between rows and unit covariance within each row give
\begin{equation}
 \begin{aligned}
 \mathbb E|\det Z_\perp|^2
 &=\sum_{\sigma,\tau\in\mathfrak S_\nu}
 \operatorname{sgn}(\sigma)\operatorname{sgn}(\tau)
 \prod_{i=1}^{\nu}
 \mathbb E\!\left[(Z_\perp)_{i,\sigma(i)}
 (Z_\perp)_{i,\tau(i)}^*\right]\\
 &=\sum_{\sigma,\tau\in\mathfrak S_\nu}
 \operatorname{sgn}(\sigma)\operatorname{sgn}(\tau)
 \prod_{i=1}^{\nu}\delta_{\sigma(i),\tau(i)}
 =\sum_{\sigma\in\mathfrak S_\nu}1=\nu!.
 \end{aligned}
\end{equation}
Only equal permutations survive the average. Taking the expectation of the pointwise matrix determinant identity therefore yields
\begin{equation}
 \mathbb E_Z\left|\det\begin{bmatrix}F\\Z\end{bmatrix}\right|^2
 =\det(FF^\dagger)\,\nu!
 =(\eta-p)!\det(FF^\dagger),
\end{equation}
proving Eq.~\eqref{eq:ov-conditional-det}. Empty-block conventions cover $p=0$ and $p=\eta$. If $F$ is rank deficient, both sides vanish. The same result holds conditionally when $F$ is random but independent of $Z$: first fix $F$ and choose $Q$ from that fixed realization, then perform the Gaussian average. This is the form used below to integrate the unshared rows.

Define
\begin{equation}
 \Delta_q=\det(F_qF_q^\dagger),\quad
 \Delta_a=\det(F_aF_a^\dagger),\quad
 \Delta_{qa}=\det\!\left(
 \begin{bmatrix}F_q\\F_a\end{bmatrix}
 \begin{bmatrix}F_q\\F_a\end{bmatrix}^{\!\dagger}\right).
 \label{eq:ov-gram-dets}
\end{equation}
Condition on $F_q,F_a$ and integrate $R_q,R_a,F_0$ independently using Eq.~\eqref{eq:ov-conditional-det}. The numerator of Eq.~\eqref{eq:ov-gaussian-conversion} becomes
\begin{equation}
 (\eta-k)!(\eta-\ell)!m!\,
 \mathbb E[\Delta_q\Delta_a\Delta_{qa}].
 \label{eq:ov-conditioned-product}
\end{equation}

Condition on $F_q$ and choose a unitary column rotation $Q$ such that
$F_qQ=[L\;0]$, with $L\in\mathbb C^{k\times k}$ nonsingular almost surely.
Write $F_aQ=[A\;B]=[F_{a,\parallel}\;F_{a,\perp}]$, where
$B\in\mathbb C^{\ell\times(\eta-k)}$. The stacked Gram matrix is
\begin{equation}
 \begin{bmatrix}
 LL^\dagger & LA^\dagger\\
 AL^\dagger & AA^\dagger+BB^\dagger
 \end{bmatrix}.
\end{equation}
Multiplying this matrix on the left by the determinant-one matrix
\begin{equation}
 \begin{bmatrix}
 I_k&0\\-AL^\dagger(LL^\dagger)^{-1}&I_\ell
 \end{bmatrix}
\end{equation}
eliminates its lower-left block. Since
$L^\dagger(LL^\dagger)^{-1}L=I_k$, the lower-right block becomes
$AA^\dagger+BB^\dagger-AA^\dagger=BB^\dagger$. Taking matrix determinants gives
\begin{equation}
 \Delta_{qa}=\Delta_q\det(F_{a,\perp}F_{a,\perp}^\dagger).
 \label{eq:ov-schur}
\end{equation}
The same identity holds for $k=0$ or $\ell=0$ with empty matrix determinants
interpreted as one.

The rotation $Q$ depends on $F_q$, so factorization of the expectation
requires conditioning. Given $F_q$, the matrix $Q$ is fixed and $F_a$
is still standard complex Gaussian. Right-unitary invariance implies that
$Z=F_aQ$, conditional on $F_q$, has the standard complex-Gaussian law,
independent of the value of $F_q$. Partition its columns as
$Z=[Z_\parallel\;Z_\perp]$, with $\eta-k$ columns in $Z_\perp$, and define
\begin{equation}
 C_{k,\ell}=\mathbb E_Z\!\left[
 \det(ZZ^\dagger)\det(Z_\perp Z_\perp^\dagger)\right].
\end{equation}
Because $F_aF_a^\dagger=ZZ^\dagger$, the conditional expectation of
$\Delta_a\det(F_{a,\perp}F_{a,\perp}^\dagger)$ given $F_q$ equals
$C_{k,\ell}$. In particular it does not depend on $F_q$. Therefore
\begin{equation}
 \begin{aligned}
 \mathbb E[\Delta_q\Delta_a\Delta_{qa}]
 &=\mathbb E_{F_q}\!\left[\Delta_q^2
 \mathbb E\!\left[
 \Delta_a\det(F_{a,\perp}F_{a,\perp}^\dagger)\mid F_q
 \right]\right]\\
 &=\mathbb E_{F_q}[\Delta_q^2]C_{k,\ell}.
 \end{aligned}
\end{equation}
Equivalently,
\begin{equation}
 \mathbb E[\Delta_q\Delta_a\Delta_{qa}]
 =\mathbb E[\Delta_q^2]\,
 \mathbb E[\Delta_a\det(F_{a,\perp}F_{a,\perp}^\dagger)].
 \label{eq:ov-factor-moment}
\end{equation}
The two matrix determinants inside the second expectation share Gaussian entries
and are not factored further.
The first factor is
\begin{equation}
 \mathbb E[\Delta_q^2]
 =\frac{\eta!(\eta+1)!}{(\eta-k)!(\eta-k+1)!}.
 \label{eq:ov-first-factor}
\end{equation}
For the second factor, use $Z=F_aQ$ as above. Cauchy--Binet gives
\begin{equation}
 \begin{aligned}
 \Delta_a=\det(ZZ^\dagger)
 &=\sum_{\substack{J\subseteq[\eta]\\|J|=\ell}}
 \det Z_{:,J}\det(Z_{:,J}^\dagger)\\
 &=\sum_{\substack{J\subseteq[\eta]\\|J|=\ell}}
 |\det Z_{:,J}|^2.
 \end{aligned}
 \label{eq:ov-cauchy-binet}
\end{equation}
There are $\binom{\eta}{\ell}$ summands. Multiplying by $\Delta_a$ and
averaging yields
\begin{equation}
 \mathbb E[\Delta_a^2]
 =\sum_{\substack{J\subseteq[\eta]\\|J|=\ell}}
 \mathbb E[\Delta_a|\det Z_{:,J}|^2].
\end{equation}
Column permutations preserve both the distribution of $Z$ and
$\Delta_a=\det(ZZ^\dagger)$, and map any $\ell$-column subset to any other.
Every term in this last sum therefore has the same expectation:
\begin{equation}
 \mathbb E[\Delta_a|\det Z_{:,J}|^2]
 =\frac{\mathbb E[\Delta_a^2]}{\binom{\eta}{\ell}}.
\end{equation}
Let $P=\{k+1,\ldots,\eta\}$ denote the columns of $Z_\perp$.
A second application of Cauchy--Binet gives
\begin{equation}
 \det(Z_\perp Z_\perp^\dagger)
 =\sum_{\substack{J\subseteq P\\|J|=\ell}}|\det Z_{:,J}|^2.
\end{equation}
Only $\binom{\eta-k}{\ell}$ subsets occur in this sum. Using the equal-term
expectation and then Eq.~\eqref{eq:ov-wishart-two}, we obtain
\begin{align}
 &\mathbb E[\Delta_a\det(F_{a,\perp}F_{a,\perp}^\dagger)]
 \notag\\
 &\quad=\frac{\binom{\eta-k}{\ell}}{\binom{\eta}{\ell}}
 \mathbb E[\Delta_a^2]
 \notag\\
 &\quad=\frac{(\eta-k)!(\eta-\ell)!}{m!\,\eta!}
 \frac{\eta!(\eta+1)!}{(\eta-\ell)!(\eta-\ell+1)!}
 \notag\\
 &\quad=\frac{(\eta-k)!(\eta+1)!}{m!(\eta-\ell+1)!}.
 \label{eq:ov-second-factor}
\end{align}
Here $m=\eta-k-\ell$. The binomial ratio follows from column symmetry,
without assuming independence between a minor and the full Gram matrix determinant.
Combining Eqs.~\eqref{eq:ov-conditioned-product}, \eqref{eq:ov-factor-moment}, \eqref{eq:ov-first-factor}, and \eqref{eq:ov-second-factor} and canceling factorials yields
\begin{equation}
 \mathbb E[|\det\mathcal G_{\vec q}|^2
 |\det\mathcal G_{\vec a}|^2|\det\mathcal G_{\vec r}|^2]
 =\frac{\eta![(\eta+1)!]^2}{(\eta-k+1)(\eta-\ell+1)}.
 \label{eq:ov-gaussian-third}
\end{equation}
Therefore,
\begin{equation}
 \mathbb E_V[x_{\vec q}x_{\vec a}x_{\vec r}]
 =\frac{(\eta+1)^2}
 {(\eta-k+1)(\eta-\ell+1)d_0d_1d_2}.
 \label{eq:ov-haar-third}
\end{equation}

For completeness, the same polar decomposition verifies the estimator normalization. Disjointness makes $\mathcal G_{\vec q}$ and $\mathcal G_{\vec a}$ independent, so their product of squared matrix determinant magnitudes has mean $(\eta!)^2$. Since $\mathbb E(\det\Xi)^2=(\eta!)^2d_0d_1$,
\begin{equation}
 \mathbb E_V[x_{\vec q}x_{\vec a}]=\frac1{d_0d_1}.
 \label{eq:ov-haar-second}
\end{equation}
Expanding $|\langle V|\Psi_{\mathrm{aug}}\rangle|^2$ in Eq.~\eqref{eq:ov-born-average}, independent mode phases leave only the term $\langle\vec q|\Psi_T\rangle x_{\vec q}x_{\vec a}/2$ in the mean of $v_{\vec q}v_{\vec a}^*$. Thus
\begin{equation}
 \mathbb E_{u,\vec z}[v_{\vec q}v_{\vec a}^*]
 =\frac{d_0}{2}\langle\vec q|\Psi_T\rangle
 \mathbb E_V[x_{\vec q}x_{\vec a}]
 =\frac{o_\phi}{2d_1},
 \label{eq:ov-unbiased-check}
\end{equation}
which confirms Eq.~\eqref{eq:ov-estimator} without an additional channel calculation.

\subsection{Exact spectrum and tight variance}

Substitution of Eq.~\eqref{eq:ov-haar-third} into Eq.~\eqref{eq:ov-eigenvalue-moment} gives every sector eigenvalue:
\begin{equation}
 \gamma_{k,\ell}
 =4(\eta+1)^2\frac{N_{\mathrm{aug}}-\eta+2}{N_{\mathrm{aug}}+2}
 \frac1{(\eta-k+1)(\eta-\ell+1)}.
 \label{eq:ov-full-spectrum}
\end{equation}
The denominator satisfies
\begin{equation}
 (\eta-k+1)(\eta-\ell+1)
 =(\eta+1)(m+1)+k\ell\geq\eta+1.
 \label{eq:ov-edge-inequality}
\end{equation}
Equality requires $m=0$ and $k\ell=0$, corresponding precisely to $\vec r=\vec q$ or $\vec r=\vec a$. Consequently,
\begin{equation}
 \|\mathcal B_{\vec q,\vec a}\|_\infty
 =\gamma_{\eta,0}=\gamma_{0,\eta}
 =4(\eta+1)\frac{N_{\mathrm{aug}}-\eta+2}{N_{\mathrm{aug}}+2}=4b_{N,\eta}.
 \label{eq:ov-operator-norm}
\end{equation}

\begin{proof}[Proof of Theorem~\ref{thm:ov-variance}]
The physical branch of $|\Psi_{\mathrm{aug}}\rangle$ has no occupation in $\vec a$. A physical configuration with $j$ particles outside $\vec q$ lies in the sector $k=\eta-j$, $\ell=0$, whose eigenvalue is $4b_{N,\eta}/(j+1)$. The ancillary branch has eigenvalue $4b_{N,\eta}$. Since the second-moment operator is diagonal, their coherence contributes zero, and
\begin{align}
 \langle\Psi_{\mathrm{aug}}|\mathcal B_{\vec q,\vec a}|\Psi_{\mathrm{aug}}\rangle
 &=\frac12\langle\Psi_T|\mathcal B_{\vec q,\vec a}|\Psi_T\rangle
 +\frac12\gamma_{0,\eta}
 \notag\\
 &=2b_{N,\eta}\left(1+\sum_{j=0}^{\eta}\frac{p_j}{j+1}\right).
 \label{eq:ov-physical-second}
\end{align}
Unbiasedness gives the variance by subtracting $|o_\phi|^2=p_0$. Moreover,
\begin{equation}
 \sum_{j=0}^{\eta}\frac{p_j}{j+1}
 \leq p_0+\frac{1-p_0}{2}=\frac{1+p_0}{2}.
 \label{eq:ov-p-bound}
\end{equation}
Since $b_{N,\eta}>1$,
\begin{equation}
 \operatorname{Var}_{\mathbb C}(\widehat O_{\vec q,w})
 \leq3b_{N,\eta}+(b_{N,\eta}-1)p_0
 \leq4b_{N,\eta}-1.
 \label{eq:ov-var-max}
\end{equation}
All inequalities are equalities for $p_0=1$, establishing tightness. Finally, $(N+2)/(N+\eta+2)\leq1$ gives the mode-independent bound.
\end{proof}

The second-moment bound and variance bound are distinct: the former is $4b_{N,\eta}$, whereas the latter is $4b_{N,\eta}-1$. The variance is also uniformly bounded below for this augmented-state protocol. Positivity and the fixed reference weight imply
\begin{equation}
 \mathbb E|\widehat O_{\vec q,w}|^2\geq2b_{N,\eta},\qquad
 \operatorname{Var}_{\mathbb C}(\widehat O_{\vec q,w})
 \geq2b_{N,\eta}-1\geq\eta+1,
 \label{eq:ov-lower-bound}
\end{equation}
since $N\geq\eta$ implies $b_{N,\eta}\geq(\eta+2)/2$. Thus the single-shadow complex variance is $\Theta(\eta)$ uniformly over normalized trial-Slater-determinant pairs and allowed mode counts. A spectral average over the entire augmented particle sector does not describe this restricted family of inputs, each of which has probability $1/2$ on $|\vec a\rangle$.

\subsection{Sample complexity and general walkers}

\paragraph{Fixed-target overlap estimation.}
For $N_{\mathrm{shadow}}$ independent shadow records, the ordinary mean satisfies
\begin{equation}
 \mathbb E|\widehat o_\phi-o_\phi|^2
 =\frac{\operatorname{Var}_{\mathbb C}(\widehat O_{\vec q,w})}{N_{\mathrm{shadow}}}
 \leq\frac{4\eta+3}{N_{\mathrm{shadow}}}.
 \label{eq:ov-mean-mse}
\end{equation}
For $L$ fixed normalized target Slater determinants, coordinatewise median-of-means estimates achieve complex absolute error at most $\epsilon$ simultaneously, with probability at least $1-\delta$, using
\begin{equation}
 N_{\mathrm{shadow}}=\mathcal{O}\!\left(\frac{\eta}{\epsilon^2}\log\frac{2L}{\delta}\right)
 \label{eq:ov-sample-complexity}
\end{equation}
independent records shared by all targets.
\par

For a general full-column-rank walker matrix $\Phi$, let $\Phi=QR$ be its thin QR factorization. The Slater determinant state and estimator both acquire the bra-oriented factor $\det(R)^*$, so
\begin{equation}
 \operatorname{Var}_{\mathbb C}(\widehat O_{\Phi})
 =\det(\Phi^\dagger\Phi)\,
 \operatorname{Var}_{\mathbb C}(\widehat O_Q)
 \leq\det(\Phi^\dagger\Phi)(4b_{N,\eta}-1).
 \label{eq:ov-general-walker}
\end{equation}
Here $\widehat O_\Phi$ denotes the same matrix determinant estimator evaluated with the orbital matrix $\Phi$. If $\Phi$ is rank deficient, both the Slater determinant state and estimator vanish. The conjugate estimator $\widehat D_{u,\vec z}(\Phi)$ has the same complex variance. These results establish additive-error guarantees for fixed overlap targets; relative errors near overlap zeros and nonlinear ratios along shadow-dependent AFQMC trajectories require separate analysis.

\section{Virtual Correlation Energy}
As shown in the following, orbital-rotation shadows also lead to a simpler VCE formulation.
Assuming an unrestricted framework, we first write the trial wave function as
\begin{equation}
    \ket{\Psi_T}=\ket{\Xi^\alpha_c}\otimes\ket{\Xi^\beta_c}\otimes\ket{\Psi_{T,a}}\otimes\ket{0^\alpha_v}\otimes\ket{0^\beta_v},
\end{equation}
where $\ket{\Psi_{T,a}}$ is the trial state prepared on a quantum computer within the active space with $\eta$ 
electrons, $\ket{\Xi^{\alpha (\beta)}_c}$ is a Slater determinant containing all frozen occupied $\alpha$ ($\beta$) orbitals with $\eta^{\alpha (\beta)}_c$ electrons, and $\ket{0^{\alpha (\beta)}_v}$ is the vacuum state in the space 
of virtual $\alpha$ ($\beta$) orbitals, with $\eta=\eta_a^\alpha+\eta_a^\beta$. The trial wave function can always be written as
\begin{equation}
    \ket{\Psi_T}=\ket{\Xi^\alpha_c}\otimes\ket{\Xi^\beta_c}\otimes\sum_{i}{c_i\ket{\chi_i^\alpha}\otimes\ket{\chi_i^\beta}}\otimes\ket{0^\alpha_v}\otimes\ket{0^\beta_v},
\end{equation}
where $\ket{\chi_i^{\alpha (\beta)}}$ is the $\alpha$ ($\beta$) component of the $i$th Slater determinant within the
active space. 

The overlap between the trial state and a walker defined in the full space is 
\begin{equation}
    \sum_i{c^\ast_i\braket{\Xi_c^\alpha\Xi_c^\beta\chi^\alpha_i\chi^\beta_i0^\alpha_v0^\beta_v}{\phi}}=\sum_i{c^\ast_i\mathrm{det}\left(\begin{pmatrix}
\Xi_c^\alpha & 0 & 0 & 0\\
0 & \Xi_c^\beta & 0 & 0 \\
0 & 0 & \chi^\alpha_i & 0 \\
0 & 0 & 0 & \chi^\beta_i \\
0 & 0 & 0 & 0 \\
0 & 0 & 0 & 0 \\
\end{pmatrix}^\dagger\begin{pmatrix}
\phi^\alpha_c & 0 \\
0 & \phi^\beta_c  \\
\phi^\alpha_a & 0 \\
0 & \phi^\beta_a \\
\phi^\alpha_v & 0  \\
0 & \phi^\beta_v  \\
\end{pmatrix}\right)}
\end{equation}
This becomes
\begin{equation}
    \sum_i{c^\ast_i\braket{\Xi_c^\alpha\Xi_c^\beta\chi^\alpha_i\chi^\beta_i}{\phi}}=\sum_i{c^\ast_i\mathrm{det}\begin{pmatrix}
\Xi_c^{\alpha\dagger}\phi^\alpha_c & 0 \\
0 & \Xi_c^{\beta\dagger}\phi^\beta_c \\
\chi^{\alpha\dagger}_i\phi^\alpha_a & 0 \\
0 & \chi^{\beta\dagger}_i\phi^\beta_a  \\
\end{pmatrix}}
\end{equation}
where $\phi^{\alpha (\beta)}_c$ and $\phi^{\alpha (\beta)}_a$ are $\eta^{\alpha(\beta)}_a+\eta^{\alpha(\beta)}_c$ column
molecular orbital coefficients of the walker. $\Xi_c^{\alpha(\beta)}$ is diagonal with ones up to the number of $\alpha$($\beta$) core
electrons and zeros elsewhere. The virtual degrees of freedom therefore no longer appear.

To further remove the core degrees of freedom, we first define
\begin{equation}
    M_c^{\alpha(\beta)}=\Xi_c^{\alpha(\beta)\dagger}\phi^{\alpha(\beta)}_c
\end{equation}
and perform the QR decomposition
\begin{equation}
    \label{eqn:vce_svd}
    \begin{split}
        M_c^{\alpha\dagger}=Q_c^\alpha R_c^\alpha \\
        M_c^{\beta\dagger}=Q_c^\beta R_c^\beta, \\
    \end{split}
\end{equation}
where $Q^\alpha_c\in\mathbb{C}^{(\eta^\alpha_c+\eta^\alpha_a)\times (\eta^\alpha_c+\eta^\alpha_a)}$ and $Q^\beta_c\in\mathbb{C}^{(\eta^\beta_c+\eta^\beta_a)\times (\eta^\beta_c+\eta^\beta_a)}$ are unitary matrices. These matrices can be partitioned as
\begin{equation}
    Q_c^{\alpha(\beta)}=\begin{pmatrix}
        Q_{1c}^{\alpha(\beta)} & Q_{2c}^{\alpha(\beta)}
    \end{pmatrix}
\end{equation}
where $Q_{1c}^{\alpha(\beta)}\in\mathbb{C}^{(\eta^{\alpha(\beta)}_c+\eta^{\alpha(\beta)}_a)\times \eta^{\alpha(\beta)}_c}$ and $Q_{2c}^{\alpha(\beta)}\in\mathbb{C}^{(\eta^{\alpha(\beta)}_c+\eta^{\alpha(\beta)}_a)\times \eta^{\alpha(\beta)}_a}$.
The $R$ matrix is 
\begin{equation}
    R^{\alpha(\beta)}_c=\begin{pmatrix}
        \tilde{R}^{\alpha(\beta)} \\
        0 \\
    \end{pmatrix}
\end{equation}
where $\tilde{R}^{\alpha(\beta)}$ is an $\eta_c^{\alpha(\beta)}\times \eta_c^{\alpha(\beta)}$ upper-triangular matrix. Defining $L_c^{\alpha(\beta)}=(\tilde{R}^{\alpha(\beta)})^\dagger$, we have

\begin{equation}
    \begin{split}
        M_c^{\alpha}=L_c^\alpha Q_{1c}^{\alpha\dagger}  \\
        M_c^{\beta}= L_c^\beta Q_{1c}^{\beta\dagger} . \\
    \end{split}
\end{equation}
Then, if we define
\begin{equation}
    Q=\begin{pmatrix}
Q^\alpha_{1c} & 0 & Q^{\alpha}_{2c} & 0 \\
0 & Q^\beta_{1c} & 0 & Q^{\beta}_{2c} \\
\end{pmatrix},
\end{equation}
the overlap can be written as
\begin{equation}
    \begin{split}
        \braket{\Xi_c^\alpha\Xi_c^\beta\chi^\alpha_i\chi^\beta_i}{\phi}&=\mathrm{det}\left(\begin{pmatrix}
\Xi_c^\alpha & 0 & 0 & 0\\
0 & \Xi_c^\beta & 0 & 0 \\
0 & 0 & \chi^\alpha_i & 0 \\
0 & 0 & 0 & \chi^\beta_i \\
\end{pmatrix}^\dagger\begin{pmatrix}
\phi^\alpha_c & 0 \\
0 & \phi^\beta_c  \\
\phi^\alpha_a & 0 \\
0 & \phi^\beta_a  \\
\end{pmatrix}Q\right) / \mathrm{det}(Q) \\
&=\mathrm{det}(L^\alpha_c)\mathrm{det}(L^\beta_c)\mathrm{det}(\chi^{\alpha\dagger}_i\tilde{\phi}^{\alpha}_a)\mathrm{det}(\chi^{\beta\dagger}_i\tilde{\phi}^{\beta}_a) / \mathrm{det}(Q), \\
    \end{split}
\end{equation}
where $\tilde{\phi}^{\alpha(\beta)}_a=\phi^{\alpha(\beta)}_aQ^{\alpha(\beta)}_{2c}$.

Finally, computing the overlap between the trial and Slater determinant in the full
space reduces to a core-space prefactor times an active-space overlap:
\begin{equation}
  \braket{\Psi_T}{\phi}
  =
  \frac{
  \det(L^\alpha_c)\det(L^\beta_c)
  }{
  \det(Q)
  }
  \braket{\Psi_{T,a}}{\tilde{\phi}_a}.
\end{equation}
This factorization separates a VCE prefactor (from QR over core/active blocks)
from an active-space overlap evaluated with orbital rotation shadows.
In practice, analytical VCE mainly accelerates derivatives of the prefactor and
construction of derivative tensors (e.g., $dQ$, $d^2Q$), and this benefit grows
with core-space size.
By contrast, derivatives of $\braket{\Psi_{T,a}}{\tilde{\phi}_a}$ scale with the
number of shadow samples $N_{\mathrm{shadow}}$ and can dominate when $N_{\mathrm{shadow}}$
is large.

\end{document}